\newcounter{enumi_saved}
\newenvironment{myenumerateA} {
    \begin{enumerate}[leftmargin=*,label={(A\arabic{enumi})}]\setcounter{enumi}{\value{enumi_saved}}}
    {\setcounter{enumi_saved}{\value{enumi}}\end{enumerate}}
\newcounter{enumii_saved}
\newenvironment{myenumerateB} {
    \begin{enumerate}[leftmargin=*,label={(B\arabic{enumi})}]\setcounter{enumi}{\value{enumii_saved}}}
    {\setcounter{enumii_saved}{\value{enumi}}\end{enumerate}}
\newcounter{enumiii_saved}
\newenvironment{myenumerateC} {
    \begin{enumerate}[leftmargin=*,label={(C\arabic{enumi})}]\setcounter{enumi}{\value{enumiii_saved}}}
    {\setcounter{enumiii_saved}{\value{enumi}}\end{enumerate}}
\begin{document}

\setlength{\parindent}{0pt}
\setlength{\parskip}{1ex plus 0.5ex minus 0.2ex}

\theoremstyle{plain}
\newtheorem{theorem}{Theorem}
\newtheorem{proposition}{Proposition}
\newtheorem{corollary}{Corollary}
\newtheorem{lemma}{Lemma}

\theoremstyle{definition}
\newtheorem{definition}{Definition}
\newtheorem{example}{Example}
\newtheorem{assumption}{Assumption}

\theoremstyle{remark}
\newtheorem{remark}{Remark}

\title{Market Completion with Derivative Securities}

\author{Daniel C. Schwarz}

% \address{}
% \email{}
\affil{Carnegie Mellon University,\\ Department of Mathematical Sciences,\\ 5000 Forbes Avenue,\\ Pittsburgh, PA, 15213, USA}
% thanks{}
\date{May 27th, 2015}

\maketitle

\begin{abstract}
Let $S^F$ be a $\mathbb{P}$-martingale representing the price of a primitive asset in an incomplete market framework. We present easily verifiable conditions on model coefficients which guarantee the completeness of the market in which in addition to the primitive asset one may also trade a derivative contract $S^B$. Both $S^F$  and $S^B$ are defined in terms of the solution $X$ to a $2$-dimensional stochastic differential equation: $S^F_t = f(X_t)$ and $S^B_t\df\E[g(X_1) | \mathcal{F}_t]$. From a purely mathematical point of view we prove that every local martingale under $\mathbb{P}$ can be represented as a stochastic integral with respect to the $\mathbb{P}$-martingale $S \df (S^F\ S^B)$. Notably, in contrast to recent results on the endogenous completeness of equilibria markets, our conditions allow the Jacobian matrix of $(f,g)$ to be singular everywhere on $\R^2$. Hence they cover, as a special case, the prominent example of a stochastic volatility model being completed with a European call (or put) option.
\end{abstract}

\vspace{1cm}
\noindent
\textbf{AMS 2010 subject classifications:} 60G44, 60H05, 91G20, 35K15, 35K90\\
\newline
\noindent
\textbf{JEL Classification:} G10\\
\newline
\noindent
\textbf{Keywords:} completeness, derivatives, integral representation, diffusion, martingales, parabolic equations, analytic functions, Jacobian determinant

\newpage
\section{Introduction}\label{sec:intro}
Let $(\Omega,\mathcal{F},\mathbb{P})$ be a probability space, consider a fixed time horizon equal to one and let $\mathbf{F} = (\mathcal{F}_t)_{t\in[0,1]}$ be a filtration satisfying the usual conditions with $\mathcal{F}_0$ containing only $\Omega$ and the null sets of $\mathbb{P}$ and with $\mathcal{F}_1 = \mathcal{F}$. Let $S=(S^j_t)$  be a $d$-dimensional stochastic process describing the evolution of the \textit{discounted} prices of liquidly traded securities in a financial market and with the property that $S$ is a (vector) martingale under the measure $\mathbb{P}$. The model is said to be \textit{complete}, if any contingent claim payoff can be obtained as the terminal value of a self-financing trading strategy. The 2nd fundamental theorem of asset pricing (cf. \cite{jHarrison1983}) allows us to restate the completeness property in purely mathematical terms as follows: every local martingale $M=(M_t)$ admits an integral representation with respect to $S$, that is,
\begin{equation}\label{eq:int_rep}
 M_t = M_0 + \int_0^t H_u\ \Id S_u,\quad t\in [0,1],
\end{equation}
for some predictable $S$-integrable process $H=(H^j_t)$. The 2nd fundamental theorem of asset pricing also asserts that the above statements are equivalent to $\mathbb{P}$ being the unique
martingale measure for $S$ in the class of equivalent measures.

The process $S$ may for example describe the prices of stocks or option contracts, which nowadays are often traded as liquidly as their underlyings. Depending on the application one has in mind the construction of $S$ differs significantly. In general there are three possibilities to consider. Given its initial value, $S$ may be defined in a \textit{forward} form, in terms
of its predictable characteristics under the measure $\mathbb{P}$. In this case the
verification of the completeness property is straightforward. For
example, if $S$ is a driftless diffusion process under the measure $\mathbb{P}$ with volatility matrix-process $\sigma =(\sigma_t)$, then the market is complete if and only if $\sigma$ has full rank $\Id \mathbb{P}\times \Id t$ almost surely (cf. \cite[Theorem 6.6]{iKaratzas1998a}). Alternatively, $S$ can be defined in a \textit{backward} form, as the
conditional expectation under $\mathbb{P}$ of its given terminal
value. Finally, some components of $S$ may be defined in a forward form and others in a backward form leading to a \textit{forward-backward} setup.

In the present paper we assume the last setup above and focus on the case of two dimensions that is $d=2$. In particular, let $S^F = (S^F_t)$ and $S^B = (S^B_t)$ be scalar-valued martingales under $\mathbb{P}$, such that
\begin{equation*}
S = \left(\begin{array}{c} S^F \\ S^B \end{array}\right).
\end{equation*}
One may view the forward component as the discounted price of a primitive asset and the
backward component as that of a derivative security. That is, given a process $\sigma^F = (\sigma^{F,j}_t)_{j=1,2}$, a $\mathbb{P}$-Brownian
motion $W = (W^j_t)_{j=1,2}$ and a random variable $\psi$, the processes $S^F$ and $S^B$ are defined by
\begin{align*}
S^{F}_t &= S^{F}_0 + \int_0^t \sigma^F_u\ \Id
 W_u,\\
S^B_t &\df \E[\psi|\mathcal{F}_t], \quad \text{for } t \in [0,1].
\end{align*}
We are looking for easily verifiable conditions on $\sigma^F$ and $\psi$,
guaranteeing the integral representation property of all
$\mathbb{P}$-martingales with respect to $S$ and hence the
completeness of the market, in which, in addition to the primitive asset $S^F$, also the derivative contract $S^B$ can be traded.

In principal the proof of our main result Theorem \ref{thm:FBMR} below generalizes to the $d$-dimensional case. The reason we present the two-dimensional case only is twofold: first, the structural conditions on the coefficients $\sigma^F$ and $\psi$ become very complex in higher dimensions; second, using our current methods an extension to higher dimensions would require additional regularity of $\psi$ and, in particular, exclude the payoff functions of call and put options, which are only once weakly differentiable.

For our analysis we assume that $\sigma^F$ and $\psi$ are specified in terms of a solution $X$ to a two-dimensional stochastic differential equation with drift
vector $b=b(t,x)$ and volatility matrix $\sigma = \sigma(t,x)$. With respect to the space variable our conditions are quite classical: $b=b(t,\cdot)$ is once continuously differentiable and $\sigma = \sigma(t,\cdot)$ is twice continuously differentiable and possesses a bounded inverse. Further, the functions themselves and their derivatives are bounded. With respect to time our conditions are quite exacting: $b=b(\cdot,x)$ and $\sigma = \sigma(\cdot,x)$ have to be real analytic on $(0,1)$.

Our results extend and rigorously prove ideas on the completion of
markets with derivative securities, first formulated in \cite{mRomano1997} and \cite{mDavis2008}.

The paper \cite{mRomano1997} is concerned with the specific case of stochastic volatility
models. The main result in this paper requires the derivative payoff function to be a convex function of the stock price only and, unless given by the special case of a European call or put option, to be twice continuously differentiable. Perhaps most limiting from the point of view of applicability, it is required that the volatility risk premium is such that the drift coefficient of the volatility process under the equivalent martingale measure does not depend on the stock price. Moreover, also the correlation between the asset price and its (stochastic) volatility process and the volatility of the volatility process must not depend on the stock price.

In \cite{mDavis2008} the setup is not restricted to the two-dimensional case. However, the key conditions in this paper are not placed on model primitives, but on the conditional expectation $\E[(S^F_1,\psi)^\star|\mathcal{F}_t] = v(t,X_t)$. In particular, $v$ is
\textit{assumed} to be (jointly) real analytic in the time and space variables and, in the main theorem of the paper, the Jacobian matrix (with respect to $x$) of $v=v(t,x)$ is assumed to be nonzero on some open subset of $(0,1)\times\R^2$.

Our work is intimately related to recent results on the integral
representation of martingales, which were motivated by the problem of
the endogenous completeness of continuous-time Radner equilibria in financial economics
(cf. \cite{dKramkov2012,jHugonnier2012,fRiedel2012,rAnderson2008}). The differences between these results and ours are twofold: first, in a Radner equilibrium setting $S$ is specified purely in a backward form and the setup does not accommodate a forward component; second and perhaps most important, if $S^F_1 = f(X_1)$ and $\psi = g(X_1)$ the aformentioned results require the Jacobian matrix
\begin{equation*}
\left(
 \begin{array}{cc}
 f_{x^1} & f_{x^2}\\
 g_{x^1} & g_{x^2}
 \end{array}\right)(x), \quad x\in \R^2,
\end{equation*}
to have full rank at least on some open subset of $\R^2$. This condition is not satisfied even in the most pivotal example of the completion of a stochastic volatility model with a European call or put option, where the corresponding Jacobian matrix is singular everywhere on $\R^2$. We replace this requirement with a novel condition involving aside from $f$ and $g$ also the coefficients of the state process $b$ and $\sigma$ and which is satisfied in the aforementioned example of a typical stochastic volatility model being completed with a European call option (see Section \ref{sec:example}).

At first sight it may appear that the most restrictive condition, limiting the
applicability of our result, is the boundedness assumption on the
coefficients of the diffusion $X$. This assumption stems from the theory of elliptic and
parabolic partial differential equations, which plays an essential part in our proofs. However, we demonstrate in Section \ref{sec:example} how we can still accommodate popular models from financial mathematics such as geometric Brownian motion or mean-reverting processes by means of suitable changes of variables.

\paragraph{Notation and basic concepts}
Let $\mathbf{X}$ be a Banach space with norm
$\|\cdot\|$. In the sequel we will frequently use maps $h:[0,1]\to \mathbf{X}$, which are \textit{H\"{o}lder continuous} on $[0,1]$, that is, there exist constants $N>0$ and $\delta>0$ such that
\begin{equation*}
 \|h(u)-h(t)\| \leq N|u-t|^\delta, \quad u,t\in [0,1],
\end{equation*}
and \textit{analytic} on $(0,1)$, that is, for every
$u\in (0,1)$ there exist $\epsilon(u)>0$ and a family $\{A_n(u)\}$ of elements in $\mathbf{X}$, such that
\begin{equation*}
 h(t) = \sum_{n=0}^\infty A_n(u)(t-u)^n, \quad t\in (0,1),\ |t-u|<\epsilon (u).
\end{equation*}

For multi-indices $\alpha = (\alpha_1,\ldots,\alpha_d)$ of nonnegative integers, we use the notation convention $|\alpha|\df\sum_{i=1}^d \alpha_i$ and
\begin{equation*}
D^\alpha \df \frac{\partial^{|\alpha|}}{\partial
  x^{\alpha_1}_1\ldots \partial x^{\alpha_d}_d}.
\end{equation*}
Let $U\subset \R^d$. Throughout the text the following spaces will be used:

$\mathbf{L}_p(U)$ (for $p\geq 1$): the Lebesgue space of Lebesgue-measurable, real-valued functions $h$ on $U$ with the norm $\|h\|_{\mathbf{L}_p(U)}\df (\int_U |h|^p\ \Id x )^{1/p}$; $\mathbf{L}_p \df \mathbf{L}_p(\R^2)$.

$\mathbf{L}_\infty(U)$: the Lebesgue space of essentially bounded,
real-valued functions $h$ on $U$ with the norm
$\|h\|_{\mathbf{L}_\infty(U)}\df \text{ess} \sup_{U}|h|$; $\mathbf{L}_\infty \df \mathbf{L}_\infty(\R^2)$.

$\mathbf{C}^k(U)$: the Banach space of all $k$-times continuously differentiable, real-valued functions $h$ on $U$ with the norm
\begin{equation*}
\| h\|_{\mathbf{C}^k(U)} = \|h\|_{\mathbf{C}(U)} + \sum_{1\leq |\alpha|\leq k}\|D^\alpha u\|_{\mathbf{C}(U)}, 
\end{equation*}
where $\|h\|_{\mathbf{C}(U)}\df \sup_{U}|h|$; $\mathbf{C}^k \df \mathbf{C}^k(\R^2)$.

% $\mathbf{C}^0_{\text{unif}}(U)$: the vector space of uniformly continuous, real (or complex)-valued functions $f$ on $U$.
% 
% $\mathbf{A}(U)$: the Frechet space of complex-valued holomorphic
% functions $h$ on $U$; $\mathbf{A}\df\mathbf{A}(\R^d)$.
% 
Recall that a locally integrable function $h$ on $U$ is weakly
differentiable, if for every index $j=1,\ldots,d$ there exists a locally integrable function $g^j$ such that the identity
\begin{equation*}
\int_{U} g^j(x)\varphi(x)\ \Id x = -\int_{U} h(x) \frac{\partial
  \varphi}{\partial x^j}(x)\ \Id x
\end{equation*}
holds for every function $\varphi$ belonging to $\mathbf{C}^\infty_0(U)$, the space of infinitely many times differentiable functions with compact support in $U$. In this case we define $h_{x^j}\df g^j$. Weak derivatives of higher orders are defined recursively.

As is common, for $p\geq 1$, we denote by $p'$ the conjugate exponent
of $p$, defined by $p' \df p/(p-1)$ for $1<p<\infty$, $p'\df \infty$, if
$p=1$ and $p'\df 1$, if $p=\infty$.

With these definitions in mind we define the following spaces:

$\mathbf{W}^m_p(U)$ (for $m\in \{0,1,\ldots\}$ and $p\geq 1$): the
Banach space of $m$-times weakly differentiable functions $h$ with the norm
\begin{equation*}
\| h \|_{\mathbf{W}^m_p(U)} \df \|h\|_{\mathbf{L}_p(U)} + \sum_{1\leq
  |\alpha|\leq m} \| D^\alpha h\|_{\mathbf{L}_p(U)};
\end{equation*}
(The case $m=0$ recovers the classical Lebesgue spaces $\mathbf{L}_p(U)$.) $\mathbf{W}^m_p\df \mathbf{W}^m_p(\R^2)$.

$\mathbf{W}^m_{p,0}(U)$ (for $m\in \{0,1,\ldots\}$ and $p\geq 1$): the
Banach space obtained by taking the closure of $\mathbf{C}^\infty_0(U)$
in the space $\mathbf{W}^m_p(U)$.

$\mathbf{W}^{-m}_p(U)$ (for $m\in \{0,1,\ldots\}$ and $p \geq 1$): the
Banach space of all distributions $h$ of the form 
\begin{equation}\label{eq:dual_decomp}
h = \sum_{0\leq
  |\alpha|\leq m} (-1)^{|\alpha|}\langle D^\alpha \cdot, u_\alpha\rangle,
\end{equation}
where $\langle \cdot,\cdot \rangle$
denotes the inner product in $\mathbf{L}_2$ and $u_\alpha\in \mathbf{L}_{p}(U)$, with the norm
\begin{equation*}
\|h\|_{\mathbf{W}^{-m}_p(U)} \df \min \{\sum_{0\leq |\alpha|\leq m} \|
u_{\alpha}\|_{\mathbf{L}_{p}(U)} : u \text{ satisfies }
\eqref{eq:dual_decomp}\}.
\end{equation*}

For $T\subset \R$, we also define $\mathbf{W}^{r,m+2r}_p(T\times U)$ (for $r,m\in \{0,1,\ldots\}$ and $p\geq 1$): the
Banach space of functions $h=h(t,x)$, $r$-times weakly differentiable in $t$ and $(m+2r)$-times weakly differentiable in $x$ with the norm
\begin{equation*}
\| h \|_{\mathbf{W}^{r,m+2r}_p(T\times U)} \df \sum_{\substack{|\alpha| + 2\rho \leq m+2r\\ \rho \leq r}} \| D^\alpha \partial_t^\rho h\|_{\mathbf{L}_p(T\times U)}.
\end{equation*}

Our notation is in agreement with standard notation from linear
algebra. Given two vectors $x,y$ in $\R^d$, $xy$ denotes the scalar
product and $|x|\df\sqrt{xx}$. Given a matrix $M\in\R^{m\times n}$
with $m$ rows and $n$ columns, $Mx$ denotes its product with the
column vector $x$, $M^\star$ its transpose and $\|M\|_F \df \sqrt{tr(MM^\star)}$. For an $n\times n$ matrix $M$ we denote the determinant of $M$ either by $|M|$ or by $\det{M}$.
%and we define the Frobenius norm
%$|a|\df\sqrt{\text{trace}(aa^\star)}$.
Let $l = (l_1,\ldots,l_k)$ denote a multiindex complying with the condition $1\leq
l_1<\ldots<l_k\leq d$. Given $n\times n$ matrices $M$, $C^1,\ldots,C^k$, we
write $M(l; C^1,\ldots,C^k)$ for the matrix that
is obtained from $M$ by replacing the $l_p$th column of $M$ by the
$l_p$th column of $C^p$, for $p=1,\ldots,k$, while keeping the
remaining columns unchanged; if $k>n$, $M(l; C^1,\ldots,C^k)\df
0$. Let $A$ be an operator on a Banach space $\mathbf{X}$ and $M$ an $n\times n$
matrix such that $m^{ij}$ is in the domain of $A$, $i,j=1,\ldots,n$. We write
$AM$ for the entrywise application of the operator $A$; to wit $AM \df (Am^{ij})_{i,j=1,\ldots,n}$.

For a suitably regular function $h=h(t,x):T\times \R^d\to\R^n$ we denote by $J[h]=J[h](t,x)$ the Jacobian matrix-function of the vector-valued function $h(t,\cdot)$:
\begin{equation*}
 J[h](t,x) \df 
 \left(
 \begin{array}{c}
  \nabla_x h^1\\
  \vdots\\
  \nabla_x h^n
 \end{array}
 \right)(t,x),
 \quad (t,x)\in T\times\R^d,
\end{equation*}
where $\nabla_xh$ is the gradient vector of $h(t,\cdot)$, that is $\nabla_xh \df (\partial_{x^1}h,\ldots,\partial_{x^d}h)$. Similarly, for a suitably regular function $h=h(t,x):T\times \R^d\to\R$ we denote by $H[h]=H[h](t,x)$ the Hessian matrix-function of the scalar-valued function $h(t,\cdot)$, that is $H[h](t,x) \df J[\nabla_x h^\star](t,x)$, for $(t,x)\in T\times\R^d$

Throughout the text $N>0$ denotes a constant, the value of which may vary from line to line.

\section{Main Result: Forward-Backward Martingale Representation}\label{sec:MR}
% For a constant $\theta \in (0,\pi/2)$ we define the open \textit{triangle}
% \begin{equation*}
% \Delta_{\theta} \df \{\lambda = re^{i\varphi} \in \mathbb{C} : 0 < r <
% (\cos\varphi)^{-1} \text{ and } |\varphi| < \theta\}
% \end{equation*}
% the open \textit{strip}
% \begin{equation*}
% \mathcal{X}_r \df \{\lambda = x + iy \in \mathbb{C}^d : \sup_{1\leq i
%   \leq d}|y^i| < r\}
% \end{equation*}
% and their Cartesian product $U_{\theta,r} \df \Delta_\theta \times
% \mathcal{X}_r$; its closure is denoted by $\overline{U}_{\theta,r} =
% \overline{\Delta}_\theta \times\overline{\mathcal{X}}_r$.

Let $\R^d$ denote a $d$-dimensional Euclidean vector space and $b = b(t,x) :
[0,1]\times\R^2 \to \R^2$ and $\sigma = \sigma(t,x) : [0,1]\times\R^2
\to \R^{2\times 2}$ measurable functions, which for all
$i,j=1,2$, satisfy the following assumption:
\begin{myenumerateA}
\item \label{as:drift_vol_coeffs}
The maps $t\mapsto b^j(t,\cdot)$ and $t\mapsto \sigma^{ij}(t,\cdot)$ of $[0,1]$ to $\mathbf{C}$ are H\"{o}lder continuous and their restriction to $(0,1)$ is analytic. The map $t\mapsto \sigma^{ij}(t,\cdot)$ is continuous of $[0,1]$ to $\mathbf{C}^2$ and the map $t\mapsto b^j(t,\cdot)$ is continuous of $[0,1]$ to $\mathbf{C}^1$.
% For some constants $r>0$ and $\theta \in (0,\pi/2)$, the coefficients $\sigma^{ij}$ and $b^j$ can be extended to functions $\sigma^{ij},b^j : \overline{U}_{\theta,r} \to \mathbb{C}$
%   such that $\sigma^{ij}$, $b^j$ and $\partial_{x^k} \sigma^{ij}$, $k=1,\ldots,d$, belong to
%  $\mathbf{C}^1(\overline{U}_{\theta,r}) \cap \mathbf{L}_\infty(U
%  _{\theta,r}) \cap \mathbf{A}(U _{\theta,r})$.
 The matrix $\sigma$ is invertible and there exists a constant $N>0$ such that
 \begin{equation}\label{eq:uniform_ell}
  \|\sigma^{-1}(t,x)\|_F\leq N, \quad (t,x) \in [0,1]\times\R^2.
 \end{equation}
%  $a\df \sigma\sigma^\star$ is strongly elliptic: there exists $N>0$ such that
% \begin{equation*}
% \text{Re} (y a(t,x) y)\geq \frac{1}{N^2}|y|^2,\quad (t,x) \in \overline{U}_{\theta,r},\quad y\in\R^d,
% \end{equation*}
% There exists an increasing function $\omega=\omega(\epsilon):(0,\infty)\to\R$ such that $\omega(\epsilon)\to 0$ as $\epsilon\to 0^+$ and, for all $t\in [0,1]$ and all $x,y\in\mathbb{R}^2$,
% \begin{equation*}
% |\sigma(t,x)-\sigma(t,y)| \leq \omega(|x-y|).
% \end{equation*}
% The matrix-function $\sigma$ is uniformly continuous with respect
%  to $t$ and $x$: each entry $\sigma^{ij}$
%  belongs to $\mathbf{C}^0_{\text{unif}}(\overline{U}_{\theta,r})$.
\end{myenumerateA}
% Note that $\eqref{eq:sigmainv_bound}$ is equivalent to the uniform
% ellipticity of the covariance matrix-function $a\df
% \sigma\sigma^\star$: for all $y\in\R^d$ and $(\tau,z)\in \overline{U}_{\theta,r}$,
% \begin{equation*}
% \text{Re}( y^\star a(t,x)y) = \text{Re}(|y^\star\sigma(t,x)|^2) \geq \frac{1}{N^2}|y|^2.
% \end{equation*}
\begin{remark}
 Note that \eqref{eq:uniform_ell} is equivalent to the uniform ellipticity of the covariance matrix-function $a\df \sigma \sigma^\star$:
 \begin{equation*}
  ya(t,x)y = \|\sigma^\star(t,x) y\|_F^2 \geq \frac{1}{N^2}|y|^2,\quad y\in \mathbb{R}^2,\ (t,x)\in[0,1]\times\mathbb{R}^2.
 \end{equation*}
\end{remark}

Let $X_0\in\R^2$. The assumptions on $b$ and $\sigma$ in \ref{as:drift_vol_coeffs} imply that given a complete, filtered probability space $(\Omega,\mathcal{F}_1,\mathbf{F}=(\mathcal{F}_t)_{t\in[0,1]},\mathbb{P})$ on which is defined a Brownian motion $W$ with values in $\R^2$, there exists a unique stochastic process $X$, also taking values in $\R^2$, such that
\begin{equation}\label{eq:factor_process}
 X_t = X_0 + \int_0^t b(u,X_u)\ \Id u + \int_0^t \sigma(u,X_u)\ \Id W_u, \quad t \in [0,1],
\end{equation}
(cf. \cite[Theorem~2.2, Ch.~5, p.~104]{aFriedman1975}). Here the filtration $\mathbf{F}$ is assumed to be the augmentation of the Brownian filtration, that is,
\begin{equation*}
 \mathcal{F}_t \df \sigma(\mathcal{F}^W_t \cup \mathcal{N}),\quad t\in[0,1],
\end{equation*}
where $\mathcal{F}^W_t$ denotes the $\sigma$-field generated by $(W_u)_{u\in[0,t]}$ and $\mathcal{N}$ denotes the collection of all $\mathbb{P}$-null sets.

% \begin{remark}
% With respect to both $t$ and $x$ the conditions in
% \ref{as:drift_vol_coeffs} are stronger than necessary to guarantee the
% existence and uniqueness of the strong solution to
% \eqref{eq:factor_process}. We will see in Example (XXXX) that the
% requirements (XXXX) are, however, essential for the validity of our
% main result, Theorem \ref{thm:FBMR}.
% \end{remark}

Let the measurable function $r:[0,1]\times\R^2\to\R$ satisfy the following:
\begin{myenumerateA}
\item\label{as:tc_dc_coeffs} The map $t\mapsto r(t,\cdot)$ is H\"{o}lder continuous as a map of $[0,1]$ to $\mathbf{C}$, continuous as a map of $[0,1]$ to $\mathbf{C}^1$, analytic as a map of $(0,1)$ to $\mathbf{C}$. The function $r$ is nonnegative:
\begin{equation*}
 r(t,x)\geq 0, \quad (t,x)\in [0,1]\times\R^2.
\end{equation*}
\end{myenumerateA}

Let the measurable function $f=f(t,x):[0,1]\times\R^2\to\R$ be three times weakly differentiable with respect to $x$ and assume that there exists a constant $N>0$ such that, for $j,k,l=1,2$, it holds that:
\begin{myenumerateA}
\item\label{as:A3} The map $t\mapsto e^{-N|\cdot|}\partial_{x^jx^k}f(t,\cdot)$ of $(0,1)$ to $\mathbf{L}_\infty$ is analytic, the map $t\mapsto e^{-N|\cdot|}\partial_{x^j}f(t,\cdot)$ of $[0,1]$ to $\mathbf{L}_\infty$ is continuously differentiable and the map $t\mapsto e^{-N|\cdot|}\partial_{x^jx^kx^l}f(t,\cdot)$ of $[0,1]$ to $\mathbf{L}_\infty$ is continuous.
% For some constants $r>0$ and
%   $\theta\in (0,\pi/2)$, the coefficient $\alpha$ can
%   be extended to a function $\alpha :
%   \overline{U}_{\theta,r}\to \mathbb{C}$ such that $\alpha$ belongs to
%  $\mathbf{C}^1(\overline{U}_{\theta,r}) \cap \mathbf{L}_\infty(U
%  _{\theta,r}) \cap \mathbf{A}(U _{\theta,r})$.
%  The function
% $\gamma = \gamma(t,x)$ satisfies
% \begin{equation}\label{eq:mpr_eqs}
% \sum_{j=1}^d (\sigma^{ij}\gamma^j)(t,x) + b^i(t,x) + \alpha(t,x)x_i= 0, \quad i=1,\ldots,d_F.
% \end{equation}
\end{myenumerateA}
%
% \begin{remark}
% All regularity conditions in \ref{as:drift_vol_coeffs} and
% \ref{as:tc_dc_coeffs}, which are assumed to hold
% on the complex set $U_{\theta,r}$ or its closure $\overline{U}_{\theta,r}$ can be replaced by the
% simpler but stronger assumption that the relevant functions are real
% analytic on the set $[0,1]\times\R^d$ with the radius of convergence
% of the corresponding power series being bounded from below by a
% strictly positive constant. 
% \end{remark}

Recall that $a \df \sigma\sigma^\star$ is the covariance function of $X$. We denote by $\mathcal{L}^X(t)$, $t\in[0,1]$, the infinitesimal generator of the process $X$:
\begin{align*}
 \mathcal{L}^X(t) &\df \frac{1}{2}\sum_{j,k=1}^2 a^{jk}(t,x)\frac{\partial^2}{\partial x^j \partial x^k} + \sum_{j=1}^2 b^j(t,x)\frac{\partial}{\partial x^j}
\end{align*}
and define the functions $A=A(t,x)$, $B=B(t,x)$ and $C = C(t,x)$ on $[0,1]\times\R^2$ by
\begin{equation*}
\begin{aligned}
 A^{jk} &\df |J[f,a^{jk}]|-2(-1)^j(H[f]a)^{(3-j)k},\\
 B^j &\df |J[f,b^j]| -(-1)^j(\partial_t + \mathcal{L}^X(t) - r)\partial_{x^{(3-j)}}f,\\
 C &\df |J[f,r]|,
 \end{aligned}
\end{equation*}
for $j,k=1,2$.

For suitably regular functions $v=v(x)$, $\varphi = \varphi(x)$ on $\R^2$, for a bounded, open set $K$ in $\R^2$ and for $t\in[0,1]$, we define the pairing
\begin{equation*}
\mathcal{B}_K[v,\varphi;t] \df \int_K \frac{1}{2}\sum_{j,k=1}^2A^{jk}(t,x)\frac{\partial v}{\partial x^j}\frac{\partial \varphi}{\partial x^k}\\
- \sum_{j=1}^2\left(B^j - \frac{1}{2} \sum_{k=1}^2\frac{\partial A^{jk}}{\partial x^k}\right)(t,x)\frac{\partial v}{\partial x^j}\varphi+C(t,x)v\varphi\ \Id x.
\end{equation*}

Let the measurable function $g = g(x) :\R^2\to\R$ be once weakly differentiable and assume that there exists a constant $N>0$ such that:
\begin{myenumerateA}
\item\label{aas:tc_sing_onedim}
Either the Jacobian matrix $J[f,g](1,\cdot)$ has full rank almost everywhere on $\mathbb{R}^2$ or, for every bounded, open set $K$ in $\R^2$ there exists a function $\varphi=\varphi(x)$ belonging to $\mathbf{W}^1_{p,0}(K)$, for some $p\geq1$, such that $\mathcal{B}_{K}[g,\varphi;1]\neq 0$ and
\begin{equation*}
\left| \frac{\partial g}{\partial x^j}(x)\right| \leq e^{N(1+|x|)}, \quad x\in \R^2,\ j=1,2.
\end{equation*}
\end{myenumerateA}

Given the above definitions we define
% the density process $Z$ and the equivalent probability measure $\Q$ by
% \begin{align}
% Z_t &\df \exp\left( \int_0^t \gamma(u,X_u)\ \Id W_u -
%   \frac{1}{2}\int_0^t |\gamma(u,X_u)|^2\ \Id u\right), \quad t\in[0,1],\label{eq:density}
% \intertext{and}
% \frac{\Id \Q}{\Id \mathbb{P}} &\df Z_1\nonumber
% \end{align}
% and
the scalar-valued random variable $\psi$ by
\begin{equation*}
 \psi \df g(X_1)e^{-\int_0^1r(t,X_t)\ \Id t}.
\end{equation*}

The main result of the paper is
\begin{theorem}[Forward-Backward Martingale Representation]\label{thm:FBMR}
Suppose that \ref{as:drift_vol_coeffs}, \ref{as:tc_dc_coeffs}, \ref{as:A3} and \ref{aas:tc_sing_onedim} hold. Then
% the equivalent probability measure $\mathbb{Q}$, with density
% \begin{equation*}
%  \frac{\Id \mathbb{Q}}{\Id \mathbb{P}} = Z_1
% \end{equation*}
% under $\mathbb{P}$, and
the solution $(S^F,S^B,Z)$ to the forward-backward stochastic differential equation
\begin{equation}\label{eq:FB_asset_evol}
\left\{
 \begin{aligned}
  S^F_t &= S^F_0 + \int_0^t e^{-\int_0^u r(s,X_s)\ \Id s}(\nabla_x f\sigma)(u,X_u)\ \Id W_u\\
  S^B_t &= e^{-\int_0^1 r(u,X_u)\ \Id u}g(X_1) - \int_t^1 e^{-\int_0^u r(s,X_s)\ \Id s}Z_u\ \Id W_u
 \end{aligned}
 \right.
\end{equation}
is well-defined. Moreover, every local martingale $M$ under
$\mathbb{P}$ is a stochastic integral with respect to the two-dimensional $\mathbb{P}$-martingale
$S = (S^F_t,S^B_t)$, that is \eqref{eq:int_rep} holds and the market model is complete under $\mathbb{P}$.
\end{theorem}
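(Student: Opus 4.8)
The plan is to recast completeness as the a.e.\ invertibility of the dispersion matrix of $S$ and then to control its degeneracy set by exploiting analyticity in time.

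First I would identify the backward component. By the Feynman--Kac/Kolmogorov theory for the terminal value problem $\partial_t u + \mathcal{L}^X(t)u - ru = 0$ on $(0,1)\times\R^2$ with $u(1,\cdot)=g$ --- solvable in a weighted parabolic Sobolev space under \ref{as:drift_vol_coeffs}--\ref{aas:tc_sing_onedim} and the exponential growth bound on $\nabla g$ --- the function $u(t,x)\df\E[g(X_1)e^{-\int_t^1 r(s,X_s)\,\Id s}\mid X_t = x]$ is well defined and sufficiently regular, and $S^B_t = e^{-\int_0^t r(s,X_s)\,\Id s}u(t,X_t)$. This simultaneously shows the forward--backward system \eqref{eq:FB_asset_evol} is well posed (with $Z=\nabla_x u\,\sigma$) and, via It\^{o}'s formula together with the equation for $u$, that the pair $S^F,S^B$ has dispersion
\[
\rho_t = e^{-\int_0^t r(s,X_s)\,\Id s}\,J[f,u](t,X_t)\,\sigma(t,X_t).
\]
Since $W$ is two-dimensional and generates $\mathbf{F}$, the representation property \eqref{eq:int_rep} is equivalent to $\rho_t$ being invertible $\mathbb{P}\times\Id t$--a.e.; because $\sigma$ is invertible and $e^{-\int_0^t r}>0$, this reduces to showing that $\Phi\df\det J[f,u]$ satisfies $\Phi(t,X_t)\neq 0$ for $\mathbb{P}\times\Id t$--almost every $(\omega,t)$.

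Next I would remove the randomness. Under uniform ellipticity the law of $X_t$ has a strictly positive density on $\R^2$ for each $t\in(0,1)$, so the occupation measure of $(t,X_t)$ is equivalent to Lebesgue measure on $(0,1)\times\R^2$, and it suffices to prove that $\Phi(t,x)\neq 0$ for Lebesgue--a.e.\ $(t,x)$. The two devices I would bring to bear are: (i) analyticity in time --- the time-analyticity of the coefficients in \ref{as:drift_vol_coeffs}--\ref{as:A3} propagates, through analytic parabolic regularity, to analyticity of $t\mapsto u(t,\cdot)$ and hence of $t\mapsto\Phi(t,\cdot)$ as a map of $(0,1)$ into a suitable weighted function space, so that for fixed $x$ the scalar map $t\mapsto\Phi(t,x)$ is real analytic; and (ii) a weak evolution identity --- differentiating the equation for $u$ and assembling the Jacobian determinant produces, after integration by parts, exactly the coefficients $A,B,C$, and these are engineered so that $\Phi$ obeys, for every bounded open $K$ and every $\varphi\in\mathbf{W}^1_{p,0}(K)$, a relation of the form $\int_K \partial_t\Phi\,\varphi\,\Id x = -\mathcal{B}_K[u(t,\cdot),\varphi;t]$.

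With these in hand the conclusion splits according to \ref{aas:tc_sing_onedim}. If $J[f,g](1,\cdot)$ has full rank a.e., then $\Phi(1,x)=\det J[f,g](1,x)\neq 0$ for a.e.\ $x$; fixing such an $x$, the map $t\mapsto\Phi(t,x)$ is analytic on $(0,1)$ and continuous up to $t=1$ with nonzero value there, hence not identically zero, hence has only isolated zeros, so $\Phi(\cdot,x)\neq 0$ for a.e.\ $t$, and Fubini gives $\Phi\neq 0$ a.e. In the degenerate regime I would argue by contradiction: if \eqref{eq:int_rep} failed, then $\Phi$ would vanish on a set of positive measure, whence, applying the scalar time-analyticity coordinatewise, $\Phi(\cdot,x)\equiv 0$ for all $x$ in a set $E$ of positive measure; I would then upgrade this to $\Phi\equiv 0$ on $(0,1)\times\R^2$, which forces $\partial_t\Phi\equiv 0$ and, through the weak identity above evaluated as $t\uparrow 1$ with $u(1,\cdot)=g$, yields $\mathcal{B}_K[g,\varphi;1]=0$ for every $\varphi$ --- contradicting the second alternative of \ref{aas:tc_sing_onedim}.

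I expect the main obstacle to be precisely this last upgrade, i.e.\ the spatial propagation of the zero set: passing from vanishing of $\Phi$ on a positive-measure set (or on a full time-interval over a positive-measure set of $x$) to vanishing everywhere, since $\Phi$ does not solve a closed, uniformly parabolic equation in itself and the coefficients are only $\mathbf{C}^2$ in $x$, so no spatial real-analyticity is available; closing this gap is where the interplay between the time-analyticity of $u$, the weak identity carried by $A,B,C$, and condition \ref{aas:tc_sing_onedim} must be orchestrated. The two supporting technical points --- the analytic parabolic regularity of $u$ in weighted spaces (needed both for well-posedness of \eqref{eq:FB_asset_evol} and for the time-analyticity of $\Phi$) and the exact integration-by-parts identity producing $A,B,C$ --- are laborious but, I expect, routine given the regularity assumed in \ref{as:drift_vol_coeffs}--\ref{as:A3}.
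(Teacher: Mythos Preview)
Your overall architecture matches the paper's: reduce to a.e.\ invertibility of $J[f,v]\sigma$, invoke analytic parabolic regularity for the time-analyticity of $\Phi=\det J[f,v]$, and split according to \ref{aas:tc_sing_onedim}. Two points, however, differ from the paper and together dissolve your declared obstacle.

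First, the evolution identity for $\Phi$ is not the one you state. The determinant $w=\Phi$ does not satisfy $\int_K\partial_t w\,\varphi\,\Id x = -\mathcal{B}_K[u,\varphi;t]$; rather, combining the Fr\'{e}chet derivative of $\det$ with the differentiated equation $\partial_t v_j + \mathcal{G}(t)v_j + \mathcal{G}_j(t)v = 0$ yields the \emph{pointwise} parabolic-type relation
\[
\partial_t w + (\mathcal{G}(t)+c)w = -\mathcal{P}(t)v,
\]
where $\mathcal{P}(t)$ is the strong form underlying $\mathcal{B}_K$ (this is Lemma~\ref{lem:PDE_det}). The extra term $(\mathcal{G}(t)+c)w$ carries the spatial derivatives of $w$ and is what makes the identity usable locally: on any \emph{open} set where $w$ vanishes identically, both $\partial_t w$ and $(\mathcal{G}(t)+c)w$ vanish, forcing $\mathcal{P}(t)v=0$ there.

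Second --- and this is the key insight you are missing --- no upgrade from $\Phi\equiv 0$ on $(0,1)\times E$ to $\Phi\equiv 0$ on all of $(0,1)\times\R^2$ is needed. Re-read \ref{aas:tc_sing_onedim}: the hypothesis is that for \emph{every} bounded open $K$ some $\varphi$ gives $\mathcal{B}_K[g,\varphi;1]\neq 0$; to contradict it you need only exhibit \emph{one} bounded open $K'$ on which $\mathcal{B}_{K'}[g,\varphi;1]=0$ for all $\varphi$. The paper takes $K'$ open and bounded inside the zero set $H$; the displayed equation then gives $\mathcal{P}(t)v=0$ on $K'$, i.e.\ $\mathcal{A}_{K'}[v,\varphi;t]=0$ for all test $\varphi$ and all $t\in(0,1)$, and continuity of $t\mapsto\mathcal{A}_{K'}[v,\cdot;t]$ into $\mathbf{W}^{-1}_p(K')$ at $t=1$ (Lemma~\ref{lem:dual_cont}, with $v(1,\cdot)=g$) yields the contradiction. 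So the global spatial propagation you flag as the main obstacle is not part of the argument; the universal quantifier over $K$ in \ref{aas:tc_sing_onedim} together with the correct evolution identity is precisely what replaces it. (The residual issue --- that $H$, known only to have positive Lebesgue measure, must contain a bounded open set --- is passed over in the paper as well.)
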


\begin{remark}
If the function $g=g(x)$  has slightly better regularity we may interpret the structural condition stated in \ref{aas:tc_sing_onedim} in a classical sense. To illustrate this, we define the linear differential operator
\begin{equation*}
\mathcal{Q}(t) \df \frac{1}{2}\sum_{j,k=1}^2A^{jk}(t,x)\frac{\partial^2 }{\partial x^j\partial x^k}
+ \sum_{j=1}^2 B^j(t,x) \frac{\partial }{\partial x^j} - C(t,x), \quad t\in [0,1],
\end{equation*}
and assume that $g=g(x)$ is twice weakly differentiable. Then $\mathcal{B}_{K}[g,\varphi;1]\neq 0$ for all bounded, open sets $K$ in $\R^2$ is equivalent to the assumption that $\mathcal{Q}(1)g\neq 0$ almost everywhere on $\R^2$.
\end{remark}

% \begin{remark}
% Since $d>d_F$, the function $\gamma=\gamma(t,x)$ and hence the
% equivalent measure $\mathbb{Q}$ are not determined uniquely by the
% system of equations \eqref{eq:mpr_eqs}. This reflects the fact that in
% our setting a market in which only the stocks $S^F$ may be traded is
% incomplete. Theorem \ref{thm:FBMR} applies to the entire class of
% equivalent measures that is defined by the condition that
% $\gamma=\gamma(t,x)$ satisfies \eqref{eq:mpr_eqs}. The result says
% that if we fix any such equivalent measure $\mathbb{Q}$, a market in
% which the stocks $S^F$ and the derivative contracts $S^B$ are traded is complete.
% \end{remark}

% \begin{remark}
% The conclusions of Theorem \ref{thm:FBMR} can also be obtained under
% the weaker condition that $d_F + d_B \geq d$ by only slightly modifying the proof
% we present. However, this would lead to significantly more involved
% notation and longer conditions albeit adding little additional insight.
% \end{remark}

The proof of Theorem \ref{thm:FBMR} is given in Section \ref{sec:proof} and relies on specific smoothness and integrability properties of the solution to a parabolic equation,
which we obtain in Section \ref{sec:PDE} and on the invertibility of a
Jacobian matrix, which we study in Section \ref{sec:Jacobian}.

\section{Regularity of the Solution to the Associated Parabolic Equation}\label{sec:PDE}
% Let $\mathbf{X}$ and $\mathbf{Y}$ be Banach spaces. By
% $L(\mathbf{X},\mathbf{Y})$ we denote the Banach space of
% bounded, linear operators $\mathcal{H}:\mathbf{X}\to\mathbf{Y}$. Moreover, if
% $\mathcal{H}$ is an operator on $\mathbf{X}$ we denote by $\mathbf{D}(\mathcal{H})$ the domain of $\mathcal{H}$.
For $(t,x)\in[0,1]\times\R^2$, consider an elliptic operator
\begin{equation}\label{eq:PDE_operator_ND}
\mathcal{G}(t) \df \sum_{j,k=1}^2 a^{jk}(t,x)\frac{\partial^2}{\partial
  x^j \partial x^k} + \sum_{j=1}^2 b^j(t,x)\frac{\partial}{\partial
  x^j} + c(t,x), 
\end{equation}
where the coefficients $a^{jk}, b^j, c : [0,1]\times\mathbb{R}^2 \to \mathbb{R}$ are measurable
functions and satisfy:

\begin{myenumerateB}
\item\label{as:PDE_1}
The maps $t\mapsto a^{jk}(t,\cdot)$, $t\mapsto b^j(t,\cdot)$, $t\mapsto
  c(t,\cdot)$ of $[0,1]$ to $\mathbf{C}$ are H\"{o}lder continuous and their restriction to $(0,1)$ is analytic. The map $t\mapsto a^{jk}(t,\cdot)$ is continuous of $[0,1]$ to $\mathbf{C}^2$ and the maps $t\mapsto b^j(t,\cdot)$, $t\mapsto c(t,\cdot)$ are continuous of $[0,1]$ to $\mathbf{C}^1$. The matrix $a$ is symmetric: $a^{ij} =
  a^{ji}$, and uniformly elliptic: there exists $N>0$ such that
\begin{equation*}
ya(t,x)y\geq \frac{1}{N^2}|y|^2,\quad (t,x) \in [0,1]\times\R^2,\quad y\in\R^2
\end{equation*}
and the function $c$ is nonpositive:
\begin{equation*}
 c(t,x) \leq 0, \quad (t,x)\in [0,1]\times\R^2. 
\end{equation*}
%  and uniformly continuous with respect to $x$: there exists an increasing function $\omega=\omega(\epsilon):(0,\infty)\to\R$ such that $\omega(\epsilon)\to 0$ as $\epsilon\to 0^+$ and, for all $t\in [0,1]$ and all $x,y\in\mathbb{R}^2$,
% \begin{equation*}
% |a(t,x)-a(t,y)| \leq \omega(|x-y|).
% \end{equation*}
\end{myenumerateB}

% Let $f=f(t,x) : [0,1]\times\mathbb{R}^d \to \mathbb{R}$ be a measurable
% function, satisfying for some $p>1$:
% \begin{myenumerateB}
% \item\label{as:PDE_1a}
% The map $t\mapsto f(t,\cdot)$ from $[0,1]$ to $\mathbf{L}_p$ is
% H\"{o}lder continuous. The function $f$ can be extended to a function $f :
% \overline{U}_{\theta,r} \to \mathbb{C}$ such that $f$ belongs to
% $\mathbf{C}^1(\overline{U}_{\theta,r}) \cap \mathbf{A}(U _{\theta,r})$
% and satisfies
% \begin{equation*}
% \int_{\R^d}|f(x+iy,t)|^2\ \Id x \leq N^2, \quad t\in \overline{\Delta}_{\theta}, \quad y\in [-r,r]^d,
% \end{equation*}
% for some $N>0$.
% \end{myenumerateB}

Let $g=g(x) : \R^2 \to \R$ be a measurable function such that for some
$p>1$:
\begin{myenumerateB}
\item \label{as:PDE_2} the function $g$ belongs to $\mathbf{W}^1_p$.
\end{myenumerateB}

\begin{theorem}\label{thm:PDE_1}
Suppose that conditions \ref{as:PDE_1} and
\ref{as:PDE_2} hold. Then there exists a unique measurable function
$v=v(t,x)$ on $[0,1]\times\R^2$ such that
\begin{enumerate}[leftmargin=*,label={\arabic{enumi}.}]
% \item\label{thm:PDE_1_classical} the map $t\mapsto v(t,\cdot)$ is H\"{o}lder continuous of $[0,1]$
%   to $\mathbf{L}_p$ and its restriction on $[0,1)$ is continuously
%   differentiable,\\
\item $t\mapsto v(t,\cdot)$ is a continuous map of $[0,1]$ to
  $\mathbf{W}^1_p$,\\
%   $t\mapsto v(t,\cdot)$ is a continuous map of $[0,1]$ to
%   $\mathbf{W}^2_p$ (if \ref{as:PDE_2b} holds),\\
\item $t\mapsto v(t,\cdot)$ is an analytic map of $(0,1)$ to $\mathbf{W}^2_p$,\\
% \item the map $t\mapsto v(t,\cdot)$ is continuous of $[0,1)$ to $\mathbf{W}^2_p$ and its restriction on $(0,1)$ is analytic,\\
\item $t\mapsto v(t,\cdot)$ is a $p$-integrable map of $[0,1)$ to $\mathbf{W}^3_p$,\\
\item $t\mapsto \partial_t v(t,\cdot)$ is a $p$-integrable map of $[0,1)$ to $\mathbf{W}^1_p$
\end{enumerate}
and such that $v=v(t,x)$ solves the homogeneous Cauchy problem
\begin{align}
 \left(\frac{\partial }{\partial t} + \mathcal{G}(t)\right)v &= 0, \quad t\in[0,1),\label{eq:cp_PDE}\\
 v(1,\cdot) &= g.\label{eq:cp_PDE_tc}
\end{align}
\end{theorem}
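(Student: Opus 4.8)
The plan is to reverse time and recast the backward Cauchy problem \eqref{eq:cp_PDE}--\eqref{eq:cp_PDE_tc} as a forward abstract evolution equation on $\mathbf{L}_p$, and then to read off the four regularity assertions from the smoothing and analyticity properties of the associated parabolic evolution operator. Writing $\tau = 1-t$ and $\tilde v(\tau,\cdot)=v(1-\tau,\cdot)$, the problem becomes $\partial_\tau\tilde v = A(\tau)\tilde v$ on $(0,1]$ with $\tilde v(0,\cdot)=g$, where $A(\tau)$ is the realization on $\mathbf{L}_p$ of $\mathcal{G}(1-\tau)$ with domain $\mathbf{W}^2_p$. First I would verify, using the uniform ellipticity of $a$, the continuity of the coefficients into $\mathbf{C}^2$ (resp. $\mathbf{C}^1$) in \ref{as:PDE_1}, and the sign condition $c\le 0$, that each $A(\tau)$ is a sectorial operator generating a uniformly bounded analytic semigroup on $\mathbf{L}_p$, with resolvent bounds uniform in $\tau$ (the standard $\mathbf{L}_p$-generation theory for second-order elliptic operators with uniformly continuous top-order coefficients). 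The H\"{o}lder continuity in time of the coefficients as maps into $\mathbf{C}$ then gives H\"{o}lder continuity of $\tau\mapsto A(\tau)$ in $\mathcal{L}(\mathbf{W}^2_p,\mathbf{L}_p)$, which is exactly the hypothesis under which a parabolic evolution operator $U(\tau,s)$, $0\le s\le\tau\le 1$, can be constructed by the Tanabe--Sobolevskii parametrix method.

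Granting $U$, the candidate solution is $\tilde v(\tau,\cdot)=U(\tau,0)g$, that is $v(t,\cdot)=U(1-t,0)g$, which solves \eqref{eq:cp_PDE}--\eqref{eq:cp_PDE_tc} by construction; I would then establish the regularity statements in turn. Uniqueness is inherited from the uniqueness of the evolution system (equivalently from an energy estimate exploiting $c\le 0$). Assertion~1, strong continuity of $t\mapsto v(t,\cdot)$ into $\mathbf{W}^1_p$ up to the terminal time, follows from the strong continuity of $U$ together with $g\in\mathbf{W}^1_p$ and the fact that $\mathbf{W}^1_p$ lies in the real-interpolation scale between $\mathbf{L}_p$ and $\mathbf{W}^2_p=D(A)$. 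Assertions~3 and~4 quantify the parabolic smoothing: for $t<1$ the analytic evolution operator maps $\mathbf{W}^1_p$ into $\mathbf{W}^3_p$, and the continuity of the diffusion coefficients into $\mathbf{C}^2$ guarantees that $\mathcal{G}(t)$ maps $\mathbf{W}^3_p$ boundedly into $\mathbf{W}^1_p$; maximal $\mathbf{L}_p$-regularity estimates then yield the $p$-integrability on $[0,1)$ of $t\mapsto v(t,\cdot)\in\mathbf{W}^3_p$, and assertion~4 follows by reading $\partial_t v = -\mathcal{G}(t)v$ off the equation.

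The crux is assertion~2, the analyticity of $t\mapsto v(t,\cdot)$ into $\mathbf{W}^2_p$ on $(0,1)$. Here I would exploit that the coefficients of $\mathcal{G}$ are analytic in $t$ on $(0,1)$ as maps into $\mathbf{C}$, so that $z\mapsto\mathcal{G}(z)$ admits a holomorphic extension to a complex neighborhood of each $u\in(0,1)$ as an $\mathcal{L}(\mathbf{W}^2_p,\mathbf{L}_p)$-valued map. Since for $z$ close to the real axis the difference $\mathcal{G}(z)-\mathcal{G}(\mathrm{Re}\,z)$ is small in $\mathcal{L}(\mathbf{W}^2_p,\mathbf{L}_p)$, the sectoriality and the analytic-semigroup resolvent bounds are preserved under this complexification, so the complexified operators still generate analytic semigroups with locally uniform estimates. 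Solving the complex-time evolution equation (equivalently, complexifying the parametrix construction of $U$) then produces a holomorphic extension $z\mapsto v(z,\cdot)$ with values in $\mathbf{W}^2_p$, which by uniqueness agrees with the real solution and establishes the analyticity.

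I expect this last step to be the main obstacle. The difficulty is twofold: one must obtain sectoriality and resolvent estimates for the complexified generators that are uniform on a full complex neighborhood of each interior time (so that the extension of the evolution operator converges in operator norm), and one must carry the analyticity through the smoothing so that the extension takes values in $\mathbf{W}^2_p$ rather than merely in $\mathbf{L}_p$ --- keeping in mind that the coefficients are assumed analytic in time only in the $\mathbf{C}$-topology, not together with their spatial derivatives. Verifying that the parametrix series for $U$ converges holomorphically in $\mathcal{L}(\mathbf{W}^1_p,\mathbf{W}^2_p)$ on such neighborhoods, with constants that are controlled up to (but degenerate at) the endpoints $t=0,1$, is where the real work lies.
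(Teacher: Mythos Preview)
Your plan is sound and, in essence, reconstructs from first principles what the paper obtains in two strokes by citation. The paper's own proof is very short: after the time change $t\mapsto 1-t$, items~1 and~2 (continuity into $\mathbf{W}^1_p$ on $[0,1]$ and analyticity into $\mathbf{W}^2_p$ on $(0,1)$), together with continuous differentiability of $t\mapsto v(t,\cdot)$ into $\mathbf{L}_p$ and continuity into $\mathbf{W}^2_p$ on $[0,1)$, are read off directly from Theorem~3.1 in \cite{dKramkov2012}. That reference is precisely the Tanabe--Sobolevskii/analytic-semigroup machinery you describe, including the complexification argument for time-analyticity; so your ``crux'' step, while genuinely the technical heart of the matter, is already packaged there and need not be redone. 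For items~3 and~4 the paper does not invoke maximal regularity or smoothing estimates on the evolution operator as you propose; instead it observes that the information above places $v$ in $\mathbf{W}^{1,2}_p([0,1)\times\R^2)$ and then applies a higher-order interior estimate (Corollary~5.2.4 in \cite{nKrylov2008}), which uses the extra spatial derivative available on the coefficients ($a^{jk}(t,\cdot)\in\mathbf{C}^2$, $b^j(t,\cdot),c(t,\cdot)\in\mathbf{C}^1$) to lift $v$ to $\mathbf{W}^{1,3}_p([0,1)\times\R^2)$, from which the $p$-integrability of $t\mapsto v(t,\cdot)\in\mathbf{W}^3_p$ and of $t\mapsto\partial_t v(t,\cdot)\in\mathbf{W}^1_p$ is immediate.

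The practical difference is that the paper outsources both the analyticity step and the regularity lift, whereas you propose to earn them. Your route is self-contained and illuminating, but two points would need tightening if you carried it out. First, the naive smoothing bound $\|U(\tau,0)\|_{\mathbf{W}^1_p\to\mathbf{W}^3_p}\lesssim\tau^{-1}$ is not $p$-integrable near $\tau=0$, so ``maximal $\mathbf{L}_p$-regularity'' alone does not deliver item~3; you would have to differentiate the equation in $x$ (legitimate under the assumed coefficient regularity) and run maximal regularity on the system for $\nabla_x v$ with source in $\mathbf{L}_p$ --- effectively the content of Krylov's corollary. Second, your worry that the coefficients are analytic in $t$ only in the $\mathbf{C}$-topology is exactly the subtlety handled in \cite{dKramkov2012}: the complexified operators remain sectorial because $\mathcal{G}(z)-\mathcal{G}(\operatorname{Re}z)$ is a small relatively bounded perturbation in $\mathcal{L}(\mathbf{W}^2_p,\mathbf{L}_p)$, and the holomorphy of $z\mapsto U(z,0)g$ into $\mathbf{W}^2_p$ then follows from holomorphy into $\mathbf{L}_p$ combined with the resolvent identity, without needing time-analyticity of the spatial derivatives of the coefficients.
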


\begin{proof}
By assumption \ref{as:PDE_1} we know that for each $t\in[0,1]$ and $j,k=1,2$, the function $a^{jk}(t,\cdot)$ is in $\mathbf{C}^2$. In particular, the first-order partial derivatives of $a^{jk}$ with respect to $x$ are bounded and therefore the matrix $a$ is uniformly continuous with respect to $x$. Under the assumptions \ref{as:PDE_1} and \ref{as:PDE_2} the assertions of items one and two are immediately obtained upon making the time change $t\to 1-t$ in Theorem 3.1 in \cite{dKramkov2012}.

In addition Theorem 3.1 in \cite{dKramkov2012} tells us that $t\mapsto v(t,\cdot)$ is a continuously differentiable map of $[0,1)$ to $\mathbf{L}_p$ and a continuous map of $[0,1)$ to $\mathbf{W}^2_p$, which implies that $v=v(t,x)$ belongs to $\mathbf{W}^{1,2}_p([0,1)\times\R^2)$. Therefore, given the symmetry and the uniform ellipticity of the matrix-function $a=a(t,x)$, the uniform continuity of $a(t,\cdot)$, the fact that each function $a^{jk}(t,\cdot)$, $b^j(t,\cdot)$, $c(t,\cdot)$ belongs to $\mathbf{C}^1$ and the nonnegativity of $c=c(t,x)$, we may use Corollary 5.2.4 in \cite{nKrylov2008} to deduce that $v=v(t,x)$ in fact belongs to $\mathbf{W}^{1,3}_p([0,1)\times\R^2)$. The regularity in items three and four follows immediately.
\end{proof}

% \begin{remark}
%  We note that the dependency of the solution $v=v(t,x)$ to the Cauchy problem \eqref{eq:cp_PDE} and \eqref{eq:cp_PDE_tc} on the variable $x^3$ only stems from the dependency of the terminal condition $g=g(x)$ on $x^3$. The coefficients of the differential operator $\mathcal{G}$ are independent of $x^3$. This structure will allow us to easily accommodate the financial modelling setup, which we introduced in Section \ref{sec:MR}.
% \end{remark}
% 
In the next section we will require the following corollary of Theorem \ref{thm:PDE_1}, where instead of \ref{as:PDE_2} we assume that the measurable function $g=g(x)$ is once weakly differentiable and has the following
property:

\begin{myenumerateB}
\item\label{as:ic_weighted} There exists a constant $N\geq 0$ such
  that
% \begin{enumerate}
% \item\label{as:ic_weighted_1}
\begin{equation*}
e^{-N|\cdot|}\frac{\partial g}{\partial x^j} (\cdot) \ \in
\mathbf{L}_\infty,\quad j=1,2.
\end{equation*}
% \end{enumerate}
% \emph{or}
% \begin{enumerate}[resume]
% \item\label{as:ic_weighted_2}
% \begin{equation*}
% e^{-N|\cdot|}\frac{\partial^2g}{\partial x^i \partial x^j} (\cdot) \ \in
% \mathbf{L}_\infty,\quad i,j=1,\ldots,d.
% \end{equation*}
% \end{enumerate}
\end{myenumerateB}

Fix a function $\phi = \phi(x) : \R^2 \to \R$, which satisfies
\begin{equation}\label{eq:test_function}
\phi \in \mathbf{C}^\infty(\R^2) \text{ and } \phi(x) = |x| \text{ when } |x|\geq 1.
\end{equation}
% and for some $r>0$, possibly smaller than the one chosen in
% \ref{as:PDE_1}, $\partial_{x^j}\phi$ and $\partial_{x^jx^k}\phi$, $j,k=1,\ldots,d$, can
% be extended to functions $\partial_{x^j}\phi, \partial_{x^jx^k}\phi : \overline{\mathcal{X}}_r \to \mathbb{C}$
%   such that $\partial_{x^j}\phi$ and $\partial_{x^jx^k}\phi$ belong to $\mathbf{C}^1(\overline{\mathcal{X}}_r) \cap \mathbf{L}_\infty(\mathcal{X}_r) \cap \mathbf{A}(\mathcal{X}_r)$.

% Then the following corollary is proved exactly like Theorem 3.5 in
% \cite{dKramkov2012}.
\begin{corollary}\label{cor:PDE_2}
Suppose conditions \ref{as:PDE_1} and \ref{as:ic_weighted} hold. Let
$\phi=\phi(x)$ satisfy condition \eqref{eq:test_function}. Then
there exists a unique continuous function $v=v(t,x)$ on
$[0,1]\times\R^2$ and a constant $N\geq 0$ such that for every $p\geq
1$
\begin{enumerate}[leftmargin=*,label={\arabic{enumi}.}]
% \item\label{thm:PDE_1_classical} the map $t\mapsto e^{-N\phi(\cdot)}v(t,\cdot)$ is H\"{o}lder continuous of $[0,1]$ to $\mathbf{L}_p$,\\
\item $t\mapsto e^{-N\phi(\cdot)}v(t,\cdot)$ is a continuous map of $[0,1]$ to
  $\mathbf{W}^1_p$,\\
%   $t\mapsto e^{-N\phi(\cdot)}v(t,\cdot)$ is a continuous map of $[0,1]$ to
%   $\mathbf{W}^2_p$ (if \ref{as:ic_weighted_2} holds),\\
\item $t\mapsto e^{-N\phi(\cdot)}v(t,\cdot)$ is an analytic map of $(0,1)$ to $\mathbf{W}^2_p$,\\
\item $t\mapsto e^{-N\phi(\cdot)}v(t,\cdot)$ is a $p$-integrable map of $[0,1)$ to $\mathbf{W}^3_p$,\\
\item $t\mapsto e^{-N\phi(\cdot)}\partial_tv(t,\cdot)$ is a $p$-integrable map of $[0,1)$ to $\mathbf{W}^1_p$
\end{enumerate}
and such that $v=v(t,x)$ solves the Cauchy problem \eqref{eq:cp_PDE} and \eqref{eq:cp_PDE_tc}.
\end{corollary}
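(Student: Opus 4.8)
The plan is to reduce Corollary \ref{cor:PDE_2} to Theorem \ref{thm:PDE_1} by absorbing the exponential growth of $g$ and its weak derivatives into a multiplicative change of the unknown function. Since the terminal datum $g$ is permitted by \ref{as:ic_weighted} to grow exponentially, it need not belong to any $\mathbf{W}^1_p$; the weight $e^{-N\phi}$, with $\phi$ as in \eqref{eq:test_function}, is introduced precisely to restore global integrability. Concretely, I would fix $N$ strictly larger than the growth exponent appearing in \ref{as:ic_weighted} and seek $v$ in the form $v(t,x) = e^{N\phi(x)}e^{\Lambda(1-t)}w(t,x)$ for a suitable constant $\Lambda\geq 0$ to be chosen, so that $w = e^{-N\phi}e^{-\Lambda(1-t)}v$ and the weighted quantities in items 1--4 differ from $w$ only by the bounded, smooth, $t$-only factor $e^{\Lambda(1-t)}$.

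Substituting this ansatz into \eqref{eq:cp_PDE} and using that $\phi\in\mathbf{C}^\infty(\R^2)$, a direct product-rule computation shows that $w$ solves $(\partial_t + \hat{\mathcal{G}}(t))w = 0$, where $\hat{\mathcal{G}}$ is an operator of the same form \eqref{eq:PDE_operator_ND} with the identical second-order coefficients $a^{jk}$, modified first-order coefficients $\hat b^j = b^j + 2N\sum_k a^{jk}\partial_{x^k}\phi$, and zeroth-order coefficient $\hat c = c + N\sum_{jk}a^{jk}\partial_{x^jx^k}\phi + N^2\sum_{jk}a^{jk}\partial_{x^j}\phi\,\partial_{x^k}\phi + N\sum_j b^j\partial_{x^j}\phi - \Lambda$. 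I would then verify that $\hat{\mathcal{G}}$ satisfies \ref{as:PDE_1}: the principal part is unchanged, hence symmetric and uniformly elliptic with the same constant; and since $\phi$ is time-independent with all derivatives bounded (for $|x|\geq 1$ one has $\phi = |x|$, whence $\partial_{x^j}\phi$ and $\partial_{x^jx^k}\phi$ are bounded with bounded derivatives), the coefficients $\hat b^j$ and $\hat c$ inherit from $a^{jk},b^j,c$ the required H\"{o}lder continuity, $\mathbf{C}^1$- (resp. $\mathbf{C}^2$-) regularity in $x$, and analyticity in $t$. The one structural condition that is not automatic is the sign $\hat c\leq 0$: the term $N^2\sum_{jk}a^{jk}\partial_{x^j}\phi\,\partial_{x^k}\phi = N^2(\nabla\phi)^\star a (\nabla\phi)\geq 0$ is strictly positive, so the role of the factor $e^{\Lambda(1-t)}$ is exactly to subtract a constant; choosing $\Lambda$ at least the (finite) supremum of $\hat c + \Lambda$ over $[0,1]\times\R^2$ restores $\hat c\leq 0$.

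It remains to check that the transformed terminal datum $h\df e^{-N\phi}g = w(1,\cdot)$ lies in $\mathbf{W}^1_p$, so that Theorem \ref{thm:PDE_1} applies to $(\hat{\mathcal{G}},h)$. The bound $|\partial_{x^j}g(x)|\leq Ne^{N|x|}$ from \ref{as:ic_weighted} implies, by integrating the gradient along rays, that $g$ itself grows at most like $e^{N|x|}$; hence for $N$ chosen strictly above the growth exponent, both $e^{-N\phi}g$ and its weak derivative $\partial_{x^j}(e^{-N\phi}g) = e^{-N\phi}(\partial_{x^j}g - N(\partial_{x^j}\phi)g)$ decay exponentially and belong to $\mathbf{L}_p$ for every $p\geq 1$, so $h\in\mathbf{W}^1_p$ for all $p$. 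Applying Theorem \ref{thm:PDE_1} then yields a unique $w$ enjoying the regularity of its items 1--4 for each $p>1$; multiplying by $e^{\Lambda(1-t)}$ (smooth in $t$, bounded above and below) transfers these properties verbatim to $e^{-N\phi}v$, giving items 1--4 of the corollary, while reversing the substitution shows $v$ solves \eqref{eq:cp_PDE}--\eqref{eq:cp_PDE_tc} with $v(1,\cdot)=e^{N\phi}h=g$. Joint continuity of $v$ follows from item 1 together with the Sobolev embedding $\mathbf{W}^1_p(\R^2)\hookrightarrow\mathbf{C}(\R^2)$ for $p>2$, and uniqueness descends from the uniqueness in Theorem \ref{thm:PDE_1} via the invertible change of variables. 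The main obstacle I anticipate is the careful bookkeeping of the conjugated coefficients---specifically ensuring the sign condition on $\hat c$ can be recovered while preserving \ref{as:PDE_1}, and that the growth bound on $\nabla g$ indeed controls $g$ itself---together with the extension of items 1--4 from $p>1$ to $p=1$, which I would obtain by absorbing the extra exponential decay of the weight into a slightly smaller weight and applying H\"{o}lder's inequality.
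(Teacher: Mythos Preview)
Your proposal is correct and follows essentially the same route as the paper: both conjugate by $e^{-N\phi}$ together with an exponential-in-$t$ factor (your $e^{\Lambda(1-t)}$, the paper's $e^{Ct}$) chosen large enough to restore the sign condition on the zeroth-order coefficient, verify that the conjugated operator still satisfies \ref{as:PDE_1} and that $e^{-N\phi}g\in\mathbf{W}^1_p$, apply Theorem~\ref{thm:PDE_1}, and then recover continuity via Sobolev embedding and the case $p=1$ by enlarging $N$. The differences are purely cosmetic.
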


\begin{proof}
From Assumption \ref{as:ic_weighted} we deduce the existence of a
constant $M>0$ such that
\begin{equation*}
\left|\frac{\partial g}{\partial x^i}(x)\right| \leq Me^{M|x|}, \quad x\in\R^2, 
\end{equation*}
and, therefore, such that
\begin{equation*}
\left|g(x)\right| \leq |x| Me^{M|x|} + M, \quad x\in\R^2.
\end{equation*}
% and
% \begin{equation*}
% |g|(x) \leq M|x|^2e^{M|x|} + M|x| + M, \quad x\in \R^d.
% \end{equation*}
One easily verifies now that for $N>M$ and $\phi=\phi(x)$ satisfying \eqref{eq:test_function} $\left\|e^{-N\phi} g\right\|_{\mathbf{W}^1_p}<\infty
$ for every $p\geq 1$ and hence that
\begin{equation}\label{eq:tc_Sobolev}
e^{-N\phi} g \in \mathbf{W}^1_p,\quad
p\geq 1.
\end{equation}
% Similarly, it is shown that under Assumption \ref{as:ic_weighted_1}
% $e^{-N\phi} g \in \mathbf{W}^1_p$, $p\geq 1$.
Hereafter we choose the constant $N\geq 0$ from \ref{as:ic_weighted}
to also satisfy $N>M$.

Let $C\geq 0$ be a constant and define the functions $\tilde{b}^j = \tilde{b}^j(t,x)$ and
$\tilde{c}=\tilde{c}(t,x)$, so that for $t\in[0,1]$ and
$u\in\mathbf{C}^\infty((0,1)\times\R^2)$,
\begin{equation*}
\left(\frac{\partial }{\partial t} + \tilde{\mathcal{G}}(t)\right)(e^{-N\phi + Ct}u) = e^{-N\phi+Ct}\left(\frac{\partial }{\partial t}+\mathcal{G}(t)\right)u,
\end{equation*}
where
\begin{equation*}
\tilde{\mathcal{G}}(t) \df \sum_{j,k=1}^2 a^{jk}(t,x)\frac{\partial^2}{\partial
  x^j \partial x^k} + \sum_{j=1}^2 \tilde{b}^j(t,x)\frac{\partial}{\partial
  x^j} + \tilde{c}(t,x). 
\end{equation*}
Given the properties of the function $\phi$ asserted in
\eqref{eq:test_function}, for $C$ large enough, the coefficients
$\tilde{b}^j$ and $\tilde{c}^j$ satisfy the same conditions as $b^j$ and
$c$ in \ref{as:PDE_1}. Since it follows from \eqref{eq:tc_Sobolev} that also $e^{-N\phi+C}g$ belongs to $\mathbf{W}^1_p$, for every $p\geq 1$, we deduce from Theorem \ref{thm:PDE_1} the existence of
a measurable function $\tilde{v} = \tilde{v}(t,x)$, which, for every $p>1$,
complies with items one to four of Theorem \ref{thm:PDE_1} and solves the
Cauchy problem
\begin{align}
 \frac{\partial \tilde{v}}{\partial t} + \tilde{\mathcal{G}}(t) \tilde{v} &= 0, \quad t\in [0,1),\label{eq:PDE_cor}\\
 \tilde{v}(1,\cdot) &= e^{-N\phi+C}g.\label{eq:PDE_cor_tc}
\end{align}
For $p>2$, by Sobolev's embedding theorem, the continuity of the map
$t\mapsto \tilde{v}(t,\cdot)$ in $\mathbf{W}^1_p$ implies its
continuity in $\mathbf{C}$. It follows that the function
$\tilde{v}=\tilde{v}(t,x)$ is continuous on $[0,1]\times\R^2$.

Defining $v\df e^{N\phi-Ct}\tilde{v}$, we observe that $\tilde{v}$ solves
\eqref{eq:PDE_cor} and \eqref{eq:PDE_cor_tc}, if and only if $v$ solves the
Cauchy problem \eqref{eq:cp_PDE} and \eqref{eq:cp_PDE_tc}. For $p>1$, the regularity of $\tilde{v} = e^{-N\phi+Ct}v$ implies items one to four in the corollary. The
proof is completed by noting that the case $p=1$ follows trivially
from the case $p>1$ by taking the constant $N$ slightly larger.
\end{proof}

For $(t,x)\in [0,1]\times\R^2$ define
\begin{equation}\label{eq:partial}
v_j(t,x) \df \frac{\partial v}{\partial x^j}(t,x),\quad j = 1,2,
\end{equation}
and consider the elliptic operator
\begin{equation}\label{eq:diff_ell_op}
\mathcal{G}_l(t) \df \sum_{j,k=1}^2 \frac{\partial a^{jk}}{\partial x^l}(t,x)\frac{\partial^2}{\partial
  x^j \partial x^k} + \sum_{j=1}^2 \frac{\partial b^j}{\partial
  x^l}(t,x)\frac{\partial}{\partial
  x^j} + \frac{\partial c}{\partial x^l}(t,x),\quad l=1,2.
\end{equation}

Then we obtain the following corollary, which will be needed in the next section.
\begin{corollary}\label{cor:PDE_deriv}
Suppose that conditions \ref{as:PDE_1} and
\ref{as:ic_weighted} hold. Let $v=v(t,x)$ be the function generated by
Corollary \ref{cor:PDE_2} and let $v_j$ be defined as in
\eqref{eq:partial}. Then $v_j = v_j(t,x)$ solves the nonhomogeneous partial differential equation
\begin{equation}\label{eq:deriv_PDE}
 \frac{\partial v_j}{\partial t} + \mathcal{G}(t)v_j + \mathcal{G}_j(t)v = 0, \quad t\in(0,1).
\end{equation}
\end{corollary}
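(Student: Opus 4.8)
The plan is to obtain \eqref{eq:deriv_PDE} by differentiating the homogeneous equation \eqref{eq:cp_PDE} with respect to the spatial variable $x^j$ and applying the product rule to the coefficients of $\mathcal{G}(t)$. Writing $\mathcal{G}(t) = \sum_{k,l} a^{kl}\partial_{x^k x^l} + \sum_k b^k\partial_{x^k} + c$, formal differentiation gives
\begin{equation*}
\partial_{x^j}\bigl(\mathcal{G}(t)v\bigr) = \mathcal{G}(t)\bigl(\partial_{x^j}v\bigr) + \Bigl(\sum_{k,l}\partial_{x^j}a^{kl}\,\partial_{x^k x^l}v + \sum_k \partial_{x^j}b^k\,\partial_{x^k}v + \partial_{x^j}c\,v\Bigr),
\end{equation*}
where the second group on the right is precisely $\mathcal{G}_j(t)v$, cf. \eqref{eq:diff_ell_op}. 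Since partial derivatives commute in the distributional sense, $\partial_{x^j}\partial_t v = \partial_t\partial_{x^j} v = \partial_t v_j$; hence applying $\partial_{x^j}$ to \eqref{eq:cp_PDE} yields $\partial_t v_j + \mathcal{G}(t)v_j + \mathcal{G}_j(t)v = 0$, which is exactly \eqref{eq:deriv_PDE}.

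The task is therefore to legitimize this computation in the function spaces provided by Corollary \ref{cor:PDE_2}. By items two and three of that corollary, $e^{-N\phi}v(t,\cdot)$ lies in $\mathbf{W}^3_p$ for almost every $t\in(0,1)$ (with $p$-integrable norm), so $v_j = \partial_{x^j}v$ and all spatial derivatives appearing above, up to $\partial_{x^j x^k x^l}v$, are well-defined weighted $\mathbf{L}_p$ functions. Assumption \ref{as:PDE_1} guarantees $a^{kl}\in\mathbf{C}^2$ and $b^k, c\in\mathbf{C}^1$ with bounded first derivatives, so each coefficient $\partial_{x^j}a^{kl}$, $\partial_{x^j}b^k$, $\partial_{x^j}c$ is a bounded continuous function and $\mathcal{G}_j(t)$ is a well-defined operator; in particular the product rule for weak derivatives applies term by term, the products of these bounded coefficients with the weighted $\mathbf{L}_p$ derivatives of $v$ remain in the weighted space, and the regrouping into $\mathcal{G}(t)v_j + \mathcal{G}_j(t)v$ is valid for almost every $t$.

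Finally, item four of Corollary \ref{cor:PDE_2} states that $e^{-N\phi}\partial_t v(t,\cdot)\in\mathbf{W}^1_p$ with $p$-integrable norm, so $e^{-N\phi}\partial_t v_j = e^{-N\phi}\partial_{x^j}\partial_t v$ is a weighted $\mathbf{L}_p$ function and the identification $\partial_t v_j = \partial_{x^j}\partial_t v$ is justified rather than merely formal. The main point requiring care is precisely this interchange of the time derivative of the Banach-space-valued map $t\mapsto v(t,\cdot)$ with the spatial weak derivative $\partial_{x^j}$: I would verify it by testing against functions in $\mathbf{C}^\infty_0((0,1)\times\R^2)$, where the two derivatives commute as distributions, and then transfer the resulting distributional identity \eqref{eq:deriv_PDE} back to an almost-everywhere identity in the weighted $\mathbf{L}_p$ spaces using the regularity already established. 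Everything else is a routine, if bookkeeping-heavy, application of the product rule, with the exponential weight $e^{-N\phi}$ being the only feature that must be tracked consistently throughout.
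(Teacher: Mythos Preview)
Your proposal is correct and follows essentially the same approach as the paper: invoke the regularity from Corollary~\ref{cor:PDE_2} (three weak spatial derivatives of $v$ and one weak spatial derivative of $\partial_t v$) together with the $\mathbf{C}^1$-in-$x$ regularity of the coefficients from \ref{as:PDE_1}, and then differentiate \eqref{eq:cp_PDE} in $x^j$ using the product rule. The paper's proof is terser and does not spell out the bookkeeping with the weight $e^{-N\phi}$ or the interchange $\partial_{x^j}\partial_t v = \partial_t v_j$, but the underlying argument is identical.
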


\begin{proof}
% From Theorem \ref{thm:PDE_1} we know that $t\mapsto v(t,\cdot)$ is a continuous
% map of $[0,1]$ to $\mathbf{W}^1_p$ and of $(0,1]$ to
% $\mathbf{W}^2_p$, which implies the regularity of $v_j$ stated in
% item \ref{cor:deriv_PDE_weak}. Theorem \ref{thm:PDE_1} also tells us
% that $v$ is analytic on $(0,1)\times\R^d$. Since analyticity is
% preserved when taking derivatives, also $v_j$ is analytic as claimed
% in item \ref{cor:deriv_PDE_analytic}.
From Corollary \ref{cor:PDE_2} we know that the function $v=v(t,x)$ is three times weakly differentiable with respect to $x$ and that the derivative with respect to $t$ of the same function is once weakly differentiable with respect to $x$. Given condition \ref{as:PDE_1} we also know that the coefficients of the operator $\mathcal{G}$ are once continuously differentiable with respect to $x$. Hence we may differentiate the parabolic partial differential equation \eqref{eq:cp_PDE} with respect to $x^j$, $j=1,2$, which shows that $v_j = v_j(t,x)$ satisfies \eqref{eq:deriv_PDE}.
% Theorem \ref{thm:PDE_1} tells us that $v$ is analytic on
% $(0,1)\times\R^d$ and, in particular, of class
% $\mathbf{C}^\infty((0,1)\times\R^d)$. It follows that, given the
% condition \ref{as:PDE_1}, we may differentiate the parabolic partial
% differential equation \eqref{eq:cp_PDE} in the classical sense, which
% shows that $v_j = v_j(t,x)$ satisfies \eqref{eq:deriv_PDE} pointwise.
% Since
% $v(0,\cdot) = g$ and $g\in\mathbf{W}^1_p$, equation \eqref{eq:deriv_PDE_ic} follows. Hence $v_j$ solves the
% nonhomogeneous Cauchy problem \eqref{eq:deriv_PDE} and \eqref{eq:deriv_PDE_ic} in the sense that
% it satisfies \eqref{eq:deriv_PDE} pointwise on $(0,1)\times\R^d$ and the initial
% condition \eqref{eq:deriv_PDE_ic} in the $\mathbf{L}_p$-limit.
\end{proof}

\section{Invertibility of the Jacobian Matrix}\label{sec:Jacobian}
Let $a=a(t,x)$, $b=b(t,x)$, $c=c(t,x)$ and $g=g(x)$ be the coefficients from Section \ref{sec:PDE}. Let the measurable function $f=f(t,x): [0,1]\times\R^2\to \R$ be three times weakly differentiable with respect to $x$ and assume that there exists a constant $N\geq 0$ such that, for $j,k,l=1,2$, it holds that:
\begin{myenumerateB}
\item\label{aaas:forw_f_generic} 
The map $t\mapsto e^{-N|\cdot|}\partial_{x^jx^k}f(t,\cdot)$ of $(0,1)$ to $\mathbf{L}_\infty$ is analytic, the map $t\mapsto e^{-N|\cdot|}\partial_{x^j}f(t,\cdot)$ of $[0,1]$ to $\mathbf{L}_\infty$ is continuously differentiable and the map $t\mapsto e^{-N|\cdot|}\partial_{x^jx^kx^l}f(t,\cdot)$ of $[0,1]$ to $\mathbf{L}_\infty$ is continuous.
\end{myenumerateB}

We define the functions $A=A(t,x)$, $B=B(t,x)$ and $C = C(t,x)$ on $[0,1]\times\R^2$ by 
\begin{equation*}
\begin{aligned}
 A^{jk} &\df |J[f,a^{jk}]|-2(-1)^j(H[f]a)^{(3-j)k},\\
 B^j &\df |J[f,b^j]| -(-1)^j(\partial_t + \mathcal{G}(t))\partial_{x^{(3-j)}}f,\\
  C &\df |J[f,c]|,
%  &&B_s \df |J[f,r;x]|f,\\
%  &&B_{x1}^{jk} \df (-1)^j(|J[\partial_{x^k}f,a^{jk}_{xx};x]|-(H[\partial_{x^k}f,x]a_{xx})^{\bar{j}k}),
%  &&B_{x2}^{jk} \df (-1)^j(|J[f,\partial_{x^k}a^{jk}_{xx};x]|-(H[f,x]\partial_{x^k}a_{xx})^{\bar{j}k}),
 \end{aligned}
\end{equation*}
for $j,k=1,2$.

For suitably regular functions $v,\varphi :\R^2\to\R$, for an open, bounded set $K$ in $\R^2$ and for $t\in [0,1]$, we define the pairing
\begin{equation*}
\mathcal{A}_K[v,\varphi;t] \df \int_K \sum_{j,k=1}^2A^{jk}(t,x)\frac{\partial v}{\partial x^j}\frac{\partial \varphi}{\partial x^k}
- \sum_{j=1}^2\left(B^j - \sum_{k=1}^2\frac{\partial A^{jk}}{\partial x^k}\right)(t,x)\frac{\partial v}{\partial x^j}\varphi - C(t,x)v\varphi\ \Id x.
\end{equation*}

We assume that the following assumption is satisfied:
\begin{myenumerateB}
\item\label{as:tc_sing_onedim_str} Either the Jacobian matrix $J[f,g](1,\cdot)$ has full rank almost everywhere on $\R^2$ or for every open, bounded set $K$ in $\R^2$ there exists a test function $\varphi=\varphi(x)$ belonging to $\mathbf{W}^1_{p,0}(K)$, for some $p\geq1$, such that $\mathcal{A}_K[g,\varphi;1]\neq 0$.
\end{myenumerateB}

The following theorem is the main result of this section and will
eventually allow us to prove the martingale representation stated in
Theorem \ref{thm:FBMR}.
\begin{theorem}\label{thm:rank}
Suppose conditions \ref{as:PDE_1}, \ref{as:ic_weighted}, \ref{aaas:forw_f_generic} and \ref{as:tc_sing_onedim_str} are in place. Let $v=v(t,x)$ be the function furnished by Corollary \ref{cor:PDE_2}. Then 
% there exists a unique continuous function $v=v(t,x)$, on
% $[0,1]\times\R^2$ and a constant $N\geq 0$ such that the following hold:
% \begin{enumerate}[leftmargin=*,label={\arabic{enumi}.}]
% \item\label{thm:rank_reg} For every $p\geq 1$, 
% \begin{enumerate}[leftmargin=*,label={(\alph{enumii}})]
% \item $t\mapsto e^{-N\phi(\cdot)}v(t,\cdot)$ is a H\"{o}lder
%   continuous map of $[0,1]$ to $\mathbf{L}_p$; its restriction on $[0,1)$ is continuously differentiable,\\
% \item $t\mapsto e^{-N\phi(\cdot)}v(t,\cdot)$ is a continuous map of $[0,1]$ to
%   $\mathbf{W}^1_p$,\\
% %  $t\mapsto e^{-N\phi(\cdot)}v^i(t,\cdot)$ is a continuous map of $[0,1]$ to
% %   $\mathbf{W}^2_p$ (if \ref{as:tc_sing_multdim} holds),\\
% \item $t\mapsto e^{-N\phi(\cdot)}v(t,\cdot)$ is a continuous map of $[0,1)$ to $\mathbf{W}^2_p$ and its restriction to $(0,1)$ is analytic.\\
% \end{enumerate}
% \item\label{thm:rank_cauchy} The function $v=v(t,x)$ solves the homogeneous Cauchy problem
% \begin{align*}
%  \frac{\partial v}{\partial t} + \mathcal{G}(t)v &= 0, \quad t\in[0,1),\\
%  v(1,\cdot) &= g
% \end{align*}
% and $v_j=v_j(t,x)$, defined by $v_j\df \partial_{x^j}v$,
% satisfies the nonhomogenous partial differential equation
% \begin{equation*}
%  \frac{\partial v_j}{\partial t} + \mathcal{G}(t)v_j + \mathcal{G}_j(t)v = 0, \quad t\in(0,1).
% \end{equation*}
the Jacobian matrix-function $J[f,v] = J[f,v](t,x)$ has full rank almost everywhere with respect to the Lebesgue measure on $[0,1]\times\R^2$.
\end{theorem}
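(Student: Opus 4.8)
Write $D=D(t,x)\df\det J[f,v]=|J[f,v]|=f_{x^1}v_{x^2}-f_{x^2}v_{x^1}$, so that the assertion is equivalent to $D\neq 0$ for Lebesgue-almost every $(t,x)\in[0,1]\times\R^2$. The plan is to show that $D$ solves, in the weak sense encoded by the pairing $\mathcal{A}_K$, an \emph{inhomogeneous} parabolic equation whose source is governed by $\mathcal{Q}$, and then to rule out the vanishing of $D$ on a set of positive measure by combining analyticity in time with the terminal information in \ref{as:tc_sing_onedim_str}. It is essential that the equation be inhomogeneous: in the motivating example $J[f,g](1,\cdot)$ is singular everywhere, so $D(1,\cdot)\equiv 0$, and only a non-trivial source can force $D$ to be non-degenerate for $t<1$.

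First I would establish the key identity. Introduce the first-order operator $L_f w\df f_{x^1}w_{x^2}-f_{x^2}w_{x^1}$, so that $D=L_f v$. Using that $v$ solves \eqref{eq:cp_PDE} and that its spatial derivatives $v_j$ solve the differentiated equation \eqref{eq:deriv_PDE} of Corollary \ref{cor:PDE_deriv}, I would compute $(\partial_t+\mathcal{G}(t))D=(\partial_t+\mathcal{G}(t))L_f v$. Because $L_f$ commutes with $\mathcal{G}(t)$ only up to lower-order terms, the third-order contributions in $v$ cancel identically and the result is a \emph{second}-order operator applied to $v$; matching coefficients against the definitions of $A^{jk}$, $B^j$ and $C$ — in which the Hessian correction $H[f]a$ and the term $(\partial_t+\mathcal{G}(t))\partial_{x^{(3-j)}}f$ are exactly what the commutator produces — identifies this operator with $\mathcal{Q}(t)$. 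Recasting the identity in divergence form is precisely the role of $\mathcal{A}_K$: it lowers the number of derivatives falling on $v$ and on the data, and thereby accommodates the limited regularity available (namely $v(t,\cdot)\in\mathbf{W}^3_p$ for $t<1$ by Corollary \ref{cor:PDE_2}, and $g$ only once weakly differentiable). I would conclude that $D$ is a weak solution of $(\partial_t+\mathcal{G}(t))D=\mathcal{Q}(t)v$ (up to an overall sign) on $(0,1)$, with terminal value $D(1,\cdot)=|J[f,g]|$ and terminal source paired through $\mathcal{A}_K[g,\,\cdot\,;1]$.

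Next I would record the regularity and analyticity needed for the contradiction. By Corollary \ref{cor:PDE_2} the map $t\mapsto v(t,\cdot)$ is analytic from $(0,1)$ into $\mathbf{W}^2_p$ (after the exponential weight), and by \ref{aaas:forw_f_generic} the first and second spatial derivatives of $f$ depend analytically on $t$; since pointwise multiplication is continuous from the relevant weighted Sobolev/Lebesgue spaces into a weighted $\mathbf{L}_p$ space by H\"older's inequality, $t\mapsto D(t,\cdot)$ is analytic from $(0,1)$ into such a space, and in particular $t\mapsto D(t,x)$ is real-analytic for almost every $x$. Suppose, for contradiction, that $D=0$ on a set of positive measure. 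By Fubini's theorem and the fact that a non-trivial real-analytic function has only isolated zeros, $D(\cdot,x)\equiv 0$ for every $x$ in a set $G\subset\R^2$ of positive measure. Consequently $\partial_t D=0$ on $(0,1)\times G$; moreover, since $D(t,\cdot)\in\mathbf{W}^2_{p,\mathrm{loc}}$ vanishes almost everywhere on $G$, all of its spatial derivatives up to second order vanish almost everywhere on $G$, whence $\mathcal{G}(t)D(t,\cdot)=0$ a.e. on $G$. Substituting into the equation of the previous step yields $\mathcal{Q}(t)v(t,\cdot)=0$ almost everywhere on $G$, for almost every $t\in(0,1)$.

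Finally I would close using \ref{as:tc_sing_onedim_str}. If its first alternative holds, i.e. $J[f,g](1,\cdot)$ has full rank almost everywhere, then $D(1,\cdot)\neq 0$ a.e.; but letting $t\uparrow 1$ (using continuity of $t\mapsto v(t,\cdot)$ up to $t=1$ from Corollary \ref{cor:PDE_2}) forces $D(1,\cdot)=0$ on $G$, a contradiction. Otherwise I would exploit the weak non-vanishing of the terminal source: from $\mathcal{Q}(t)v(t,\cdot)=0$ on $G$ one has $\mathcal{A}_K[v(t,\cdot),\varphi;t]=0$ for test functions $\varphi$ supported in $G$, and passing to the limit $t\uparrow 1$ — justified by the analyticity in $t$ of $t\mapsto\mathcal{A}_K[v(t,\cdot),\varphi;t]$ together with $v(1,\cdot)=g$ — gives $\mathcal{A}_K[g,\varphi;1]=0$, contradicting \ref{as:tc_sing_onedim_str}. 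I expect the main obstacle to lie precisely in this last step: the equation delivers vanishing only on the \emph{measurable} set $G$, whereas \ref{as:tc_sing_onedim_str} is phrased over \emph{open} sets $K$, and the source must be transported to the terminal time $t=1$, where $g$ is merely once weakly differentiable so that $\mathcal{Q}(1)g$ is only a distribution. Reconciling these — upgrading $G$ to carry the open/weak information demanded by the hypothesis and making the limit $t\uparrow 1$ rigorous — is where the divergence-form pairing $\mathcal{A}_K$ and the time-analyticity of Corollary \ref{cor:PDE_2} are indispensable, and constitutes the heart of the argument.
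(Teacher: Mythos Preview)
Your proposal mirrors the paper's proof essentially step for step: the inhomogeneous equation $(\partial_t+\mathcal{G}(t)+c)w=-\mathcal{P}(t)v$ that you derive for $D=|J[f,v]|$ is exactly Lemma~\ref{lem:PDE_det}, the analyticity-in-$t$ plus Fubini reduction to a spatial set $H$ of positive measure is the same, and the two-case split with passage $t\uparrow 1$ in the weak pairing is handled just as in the paper via Lemma~\ref{lem:dual_cont}. The obstacle you single out---that vanishing is obtained only on a \emph{measurable} set while \ref{as:tc_sing_onedim_str} is stated for \emph{open} $K$---is genuine, and the paper addresses it only by writing ``for any open, bounded set $K'\subset H$'' without further justification; so your sketch is already as complete as the published argument on this point.
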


Before we can proof Theorem \ref{thm:rank} we first need to establish several lemmas below.

Let $\mathbf{X}$ and $\mathbf{Y}$ be Banach spaces, $\mathbf{E}$ an
open subset of $\mathbf{X}$ and consider a map
$h:\mathbf{E}\to\mathbf{Y}$. If it exists, we denote by $D^khx$ the $k$-th Fr\'{e}chet
derivative of $h$ at the point $x\in \mathbf{E}$; as is well known,
this constitutes a $k$-linear map on
the $k$-fold product $\mathbf{X}\times\ldots\times\mathbf{X}$. Accordingly, for
$x^1,\ldots,x^k\in\mathbf{X}$ we denote by $D^khx(x^1,\ldots,x^k)$ the $k$-th Fr\'{e}chet differential.

\begin{lemma}\label{lem:frechet_derivative}
Given matrices $M,C,C^1,C^2\in\mathbb{R}^{2\times 2}$, the first and second order Fr\'{e}chet differentials of the determinant map at $M$ are given by
\begin{align*}
D \det M(C) &= \sum_{l=1}^2\det M(l;C),\\
D^2 \det M(C^1,C^2) &= \sum_{l=1}^2\det M(1,2;C^l,C^{3-l}).
\end{align*}
\end{lemma}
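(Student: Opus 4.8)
The plan is to establish the formulas for the first and second Fréchet differentials of the determinant map on $\R^{2\times 2}$ by direct computation, exploiting the multilinearity of the determinant in its columns.

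\textbf{First differential.}
First I would recall that the map $M\mapsto \det M$ is a polynomial in the four entries of $M$, hence smooth, so its Fréchet derivatives exist and agree with the corresponding directional (Gâteaux) derivatives. For the first differential I would compute $D\det M(C) = \frac{d}{ds}\Big|_{s=0}\det(M+sC)$. Writing $M = (m_1\ m_2)$ and $C = (c_1\ c_2)$ in terms of their columns, multilinearity of the determinant in the columns gives
\begin{equation*}
\det(M+sC) = \det(m_1+sc_1,\ m_2+sc_2) = \det(m_1,m_2) + s\det(c_1,m_2) + s\det(m_1,c_2) + s^2\det(c_1,c_2).
\end{equation*}
Differentiating at $s=0$ retains only the two linear terms, yielding $\det(c_1,m_2) + \det(m_1,c_2)$, which in the paper's column-replacement notation is exactly $\det M(1;C) + \det M(2;C) = \sum_{l=1}^2 \det M(l;C)$. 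This establishes the first formula.

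\textbf{Second differential.}
For the second differential, the cleanest route is to differentiate the first-differential formula once more in a second direction. Since $M\mapsto D\det M(C^2) = \sum_{l=1}^2\det M(l;C^2)$ is itself a linear combination of determinants that are linear (hence affine with no constant term) in the remaining, unreplaced column of $M$, I would apply the first-differential rule to each summand in the direction $C^1$. Concretely, in $\det M(1;C^2)$ only column $2$ still carries $M$, so its derivative in direction $C^1$ replaces that column by the corresponding column of $C^1$, giving $\det M(1,2;C^2,C^1)$; symmetrically $\det M(2;C^2)$ differentiates to $\det M(1,2;C^1,C^2)$. Summing, $D^2\det M(C^1,C^2) = \det M(1,2;C^2,C^1) + \det M(1,2;C^1,C^2) = \sum_{l=1}^2 \det M(1,2;C^l,C^{3-l})$, matching the claimed expression. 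Alternatively, one can expand $\det(M+s_1C^1+s_2C^2)$ fully by multilinearity and read off the mixed coefficient of $s_1 s_2$, which must be symmetrized to recover the bilinear form; both approaches give the same answer.

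\textbf{Main obstacle.}
There is no genuine analytic difficulty here — the result is a finite-dimensional multilinear-algebra identity and the ``hard part'' is purely bookkeeping: one must verify that the column-replacement notation $M(l;C)$ and $M(1,2;C^l,C^{3-l})$ introduced earlier encodes precisely the terms produced by differentiating in the columns, and that the second differential comes out symmetric in $(C^1,C^2)$ as it must (the sum over $l$ symmetrizes it automatically). I would take care to note explicitly that in dimension $2$ the quadratic term $\det(c_1,c_2)$ in the expansion is exactly what produces the two surviving second-order terms, so no higher-order corrections appear.
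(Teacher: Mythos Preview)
Your argument is correct. The paper, however, does not carry out any computation at all: its proof consists of a single sentence citing equations (4) and (6) of \cite{rBhatia2009}, where the column-replacement formulas for the Fr\'{e}chet differentials of $\det$ are stated in arbitrary dimension. Your approach is a self-contained elementary derivation via multilinearity of the determinant in its columns, specialised to the $2\times 2$ case; this has the advantage of being readable without chasing the reference, and it makes the symmetry in $(C^1,C^2)$ transparent. The paper's approach is terser and places the result in its natural general context (the formulas hold for $n\times n$ matrices with the obvious sums over ordered index tuples), at the cost of relying on an external source.
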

\begin{proof}
The expressions are special cases of equations (4) and (6) in \cite{rBhatia2009}.
\end{proof}

Define the linear partial differential operator
\begin{equation*}
 \mathcal{P}(t) \df \sum_{j,k=1}^2 A^{jk}(t,x)\frac{\partial^2}{\partial x^j \partial x^k} + \sum_{j=1}^2 B^j(t,x) \frac{\partial }{\partial x^j} + C(t,x), \quad t\in [0,1].
\end{equation*}

\begin{lemma}\label{lem:PDE_det}
Let $f =f(t,x),v =v(t,x) : [0,1]\times\R^2 \to \R$ be measurable functions, which on $(0,1)\times\R^2$ are once weakly differentiable with respect to $t$, three times weakly differentiable with respect to $x$ and once weakly differentiable with respect to $t$ and $x$, and let $\mathcal{G}(t)$ and $\mathcal{G}_j(t)$ be the operators defined in \eqref{eq:PDE_operator_ND} and \eqref{eq:diff_ell_op} respectively. Define $f_j \df \partial_{x^j}f$, $v_j \df \partial_{x^j}v$, $j=1,2$, and assume that $v_j$ satisfies the partial differential equation
\begin{equation}
 \frac{\partial v_j}{\partial t} + \mathcal{G}(t)v_j + \mathcal{G}_j(t)v = 0, \quad t\in(0,1).\label{eq:lem_PDE}
\end{equation}
% pointwise
% , where 
% \begin{equation*}
% \hat{A}(t) \df \sum_{j,k=1}^d \hat{a}^{jk}(t,x)\frac{\partial^2}{\partial
%   x^j \partial x^k} + \sum_{j=1}^d \hat{b}^j(t,x)\frac{\partial}{\partial
%   x^j} + \hat{c}(t,x),
% \end{equation*}
% for measurable functions $\hat{a}^{jk}, \hat{b}^j, \hat{c}, f^i_j : [0,1]\times\R^d
% \to\R$.
Then the determinant function $w = w(t,x)$ defined on $[0,1]\times\R^2$ by $w \df |J[f,v]|$
% 
% We define the matrix-functions $\hat{V} = \hat{V}(t,x)$ and $\hat{F} = \hat{F}(t,x)$ by $\hat{V}\df
% (\hat{v}^i_j)_{i,j=1,\ldots,n}$ and $\hat{F}\df
% (\hat{f}^i_j)_{i,j=1,\ldots,n}$ respectively and the determinant
% function $\hat{w}=\hat{w} (t,x)$ by $\hat{w}\df\det \hat{V}$. Then $\hat{w} =\hat{w}(t,x)$
satisfies the nonhomogeneous partial differential equation
\begin{equation}\label{eq:some_cond}
 \frac{\partial w}{\partial t} + (\mathcal{G}(t)+c) w = -\mathcal{P}(t)v.
%  - \sum_{l=1}^n \det
%  \hat{V}(l;\hat{F})\\
% +\sum_{\varsigma \in S_2}\sum_{j_1,j_2=1}^d\sum_{\substack{l_1,l_2 = 1\\ l_1<l_2}}^n
%  a^{{j_1}{j_2}}\det \hat{V}\left(l_1,l_2;\frac{\partial \hat{V}}{\partial
%      x^{j_{\varsigma(1)}}},\frac{\partial \hat{V}}{\partial
%      x^{j_{\varsigma(2)}}}\right), \quad t\in (0,1).
\end{equation}
\end{lemma}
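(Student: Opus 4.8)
The plan is to verify \eqref{eq:some_cond} by a direct computation, exploiting the fact that, writing $f_j\df\partial_{x^j}f$ and $v_j\df\partial_{x^j}v$, the determinant
\begin{equation*}
w=|J[f,v]|=f_1v_2-f_2v_1
\end{equation*}
is bilinear in the gradients $\nabla_xf$ and $\nabla_xv$. First I would record a Leibniz-type identity for the elliptic operator acting on a product of scalar functions $p,q$,
\begin{equation*}
\mathcal{G}(t)(pq)=(\mathcal{G}(t)p)\,q+p\,(\mathcal{G}(t)q)-c\,pq+2\sum_{j,k=1}^2 a^{jk}\frac{\partial p}{\partial x^j}\frac{\partial q}{\partial x^k},
\end{equation*}
which follows from the ordinary product rule together with the symmetry $a^{jk}=a^{kj}$; equivalently, one reads the same structure off the first- and second-order Fr\'{e}chet differentials of the determinant map recorded in Lemma \ref{lem:frechet_derivative}, applied to $M=J[f,v]$. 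Applying this to the two products $f_1v_2$ and $f_2v_1$ that make up $w$ and adding $\partial_tw$, the two cross terms $-cf_1v_2$ and $-cf_2v_1$ combine to $-cw$, which is exactly cancelled by the extra zeroth-order term in $(\mathcal{G}(t)+c)w$. This yields
\begin{equation*}
\begin{aligned}
\frac{\partial w}{\partial t}+(\mathcal{G}(t)+c)w
&=\big[(\partial_t+\mathcal{G}(t))f_1\big]v_2-\big[(\partial_t+\mathcal{G}(t))f_2\big]v_1\\
&\quad+f_1\big[(\partial_t+\mathcal{G}(t))v_2\big]-f_2\big[(\partial_t+\mathcal{G}(t))v_1\big]\\
&\quad+2\sum_{j,k=1}^2 a^{jk}\Big(\frac{\partial f_1}{\partial x^j}\frac{\partial v_2}{\partial x^k}-\frac{\partial f_2}{\partial x^j}\frac{\partial v_1}{\partial x^k}\Big).
\end{aligned}
\end{equation*}

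The crucial step is then to eliminate the terms in which the parabolic operator falls on the $v$-row. By hypothesis \eqref{eq:lem_PDE} we have $(\partial_t+\mathcal{G}(t))v_j=-\mathcal{G}_j(t)v$, so $f_1[(\partial_t+\mathcal{G}(t))v_2]-f_2[(\partial_t+\mathcal{G}(t))v_1]=-f_1\mathcal{G}_2(t)v+f_2\mathcal{G}_1(t)v$. Note that $f$ is \emph{not} assumed to solve any equation, so the terms $(\partial_t+\mathcal{G}(t))f_j$ genuinely survive; they will be absorbed by the second summand in the definition of $B^j$.

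It then remains to match the right-hand side against $-\mathcal{P}(t)v=-\sum_{j,k}A^{jk}\partial_{x^jx^k}v-\sum_j B^jv_j-Cv$, grouping terms by the order of the derivative acting on $v$. The second-order-in-$v$ contributions come from the cross terms $2\sum a^{jk}(\cdots)$ together with the $\sum_{m,k}\partial_{x^j}a^{mk}\,\partial_{x^mx^k}v$ part of $\mathcal{G}_j(t)v$; after using $a^{jk}=a^{kj}$ and reindexing, these reproduce $-\sum_{j,k}A^{jk}\partial_{x^jx^k}v$, with the Hessian term $-2(-1)^j(H[f]a)^{(3-j)k}$ in $A^{jk}$ arising precisely from the symmetric cross terms. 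The first-order-in-$v$ contributions come from the surviving $[(\partial_t+\mathcal{G}(t))f_j]$-terms and the $\sum_m\partial_{x^j}b^m v_m$ part of $\mathcal{G}_j(t)v$, reproducing $-\sum_jB^jv_j$; the bookkeeping identity $\sum_j(-1)^j[(\partial_t+\mathcal{G}(t))f_{3-j}]v_j=[(\partial_t+\mathcal{G}(t))f_1]v_2-[(\partial_t+\mathcal{G}(t))f_2]v_1$ is what aligns the $f$-terms with $B^j$. Finally the zeroth-order remainder $-f_1\partial_{x^2}c\,v+f_2\partial_{x^1}c\,v$ coming from $\mathcal{G}_j(t)v$ equals $-Cv$ with $C=|J[f,c]|$.

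The substantive difficulty is not analytic but combinatorial: correctly tracking the signs $(-1)^j$ and the index shift $3-j$, and in particular verifying that the symmetric cross terms $2\sum a^{jk}(\partial_{x^j}f_1\,\partial_{x^k}v_2-\partial_{x^j}f_2\,\partial_{x^k}v_1)$ coincide with the Hessian contribution $-2(-1)^j(H[f]a)^{(3-j)k}$ built into $A^{jk}$. Every differentiation above is legitimate under the stated weak-differentiability hypotheses on $f$ and $v$, so once the terms are matched the identity \eqref{eq:some_cond} follows.
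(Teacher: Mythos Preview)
Your proposal is correct and follows essentially the same approach as the paper: both compute $\partial_t w+(\mathcal{G}(t)+c)w$ via a Leibniz-type expansion, substitute \eqref{eq:lem_PDE} for $(\partial_t+\mathcal{G}(t))v_j$, and then collect the coefficients of $\partial_{x^jx^k}v$, $\partial_{x^j}v$ and $v$ to identify $-\mathcal{P}(t)v$. The only cosmetic difference is that the paper packages the product-rule computation in the language of Fr\'{e}chet differentials of the determinant map (Lemma~\ref{lem:frechet_derivative}), arriving at the intermediate form $2\sum_{j,k}a^{jk}|J[f_j,v_k]|+\bigl|\begin{smallmatrix}(\partial_t+\mathcal{G}(t))\nabla_xf\\ \nabla_xv\end{smallmatrix}\bigr|-\bigl|\begin{smallmatrix}\nabla_xf\\ (\nabla_x\mathcal{G}(t))v\end{smallmatrix}\bigr|$, whereas you work directly with the scalar identity $\mathcal{G}(t)(pq)=(\mathcal{G}(t)p)q+p(\mathcal{G}(t)q)-cpq+2\sum a^{jk}\partial_{x^j}p\,\partial_{x^k}q$; as you yourself note, the two formulations are equivalent in dimension two.
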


\begin{proof}
Given our differentiability hypothesis on $f=f(t,x)$ and $v=v(t,x)$ we may differentiate the determinant function $w=w(t,x)$ with respect to $t$. Let us abbreviate throughout the proof of this lemma $J\df J[f,v]$. A simple application of the chain rule from Fr\'{e}chet differential calculus (cf. \cite[Chapter X.4]{rBhatia1997}) and the fact that $v_j$ satisfies the partial differential equation \eqref{eq:lem_PDE} yields
\begin{equation}\label{eq:det_time_evol}
\frac{\partial w}{\partial t} =
D \det J \left(
-\mathcal{G}(t)J
-\left(
\begin{array}{c}
0\\
\nabla_x\mathcal{G}(t)
\end{array}
\right)
v
+\left(\frac{\partial }{\partial t}+\mathcal{G}(t)\right)
\left(
\begin{array}{c}
\nabla_x f\\
0
\end{array}
\right)
\right) ,\quad t\in (0,1).
\end{equation}

The direct computation of $\mathcal{G}(t)w$ and making use of the identity $2c\det J = c D \det J(J)$ and the linearity of the Fr\'{e}chet derivative show that we may replace the term $-D\det J (\mathcal{G}(t)J)$ above with
\begin{equation*}
\sum_{j,k=1}^2 a^{jk} D^2 \det J \left(\frac{\partial J}{\partial x^j}, \frac{\partial J}{\partial x^k} \right)- (\mathcal{G}(t) + c) w, \quad t\in(0,1).
\end{equation*}
% Hence we may replace the term $-D\det J(\mathcal{G}(t)J)$ on the right hand side of \eqref{eq:det_time_evol} with
% \begin{equation*}
% (A(t) + (n-1)c) \hat{w} - \sum_{j,k=1}^d a^{jk} D^2 \det \hat{V} \left(\frac{\partial
%   \hat{V}}{\partial x^j}, \frac{\partial
%   \hat{V}}{\partial x^k} \right), \quad t\in(0,1).
% \end{equation*}
By the explicit formulae for the first and second order Fr\'{e}chet derivative of the determinant map derived in Lemma \ref{lem:frechet_derivative} and the symmetry of the matrix-function $a$, we obtain after some computations
\begin{equation*}
 \frac{\partial w}{\partial t} + (\mathcal{G}(t)+c) w =
2\sum_{j,k=1}^2 a^{jk} |J[f_j,v_k]|
+\left|
 \begin{array}{c}
(\partial_{t}+\mathcal{G}(t))\nabla_x f\\
\nabla_x v
\end{array}
 \right|
 -\left|
 \begin{array}{c}
\nabla_x f\\
(\nabla_x \mathcal{G}(t)) v
 \end{array}
 \right|.
 %  -\sum_{j=1}^2(-1)^j\left(\frac{\partial f}{\partial x^j}\frac{\partial \mathcal{G}}{\partial x^{\hat{j}}}v
%  +\left(\frac{\partial }{\partial t} + \mathcal{G}(t)\right)\frac{\partial f}{\partial x^j}\frac{\partial v}{\partial x^{\hat{j}}}\right)
%  +\left|\left(
%  \begin{array}{cc}
%  \partial_t + \mathcal{G}(t) & 0\\
%   0 & 1
%  \end{array}
%  \right)
%  J[f,v;x^1,x^2]\right|
% -\left|
%  J[f,\mathcal{G};x^1,x^2]^\star
%  \left(
%  \begin{array}{cc}
%   1 & 0\\
%   0 & v
%  \end{array}
%  \right)
% \right|.
 \end{equation*}
Collecting the coefficients of $\partial^2_{x^jx^k}v$, $\partial_{x^j}v$ and $v$ yields the result.
\end{proof}

\begin{lemma}\label{lem:dual_cont}
 Let $\gamma^j,\eta : [0,1]\times\R^2 \to \R$, $j=1,2$, be measurable functions such that, for $p>1$, the maps $t\mapsto\gamma^j(t,\cdot),t\mapsto\eta(t,\cdot)$ of $[0,1]$ to $\mathbf{L}_{p,\text{loc}}$ are continuous. Let $K$ be an open, bounded set in $\R^2$ and $\varphi=\varphi(x)$ a test function belonging to $\mathbf{W}^1_{p',0}(K)$. Then, for each $t\in[0,1]$, the pairing
 \begin{equation*}
 \tilde{\mathcal{A}}_K(\varphi;t) \df \int_K \sum_{j=1}^2 \gamma^j(t,x) \frac{\partial \varphi}{\partial x^j} + \eta(t,x)\varphi\ \Id x
 \end{equation*}
is a bounded, linear functional on $\mathbf{W}^1_{p',0}(K)$. Moreover, the map $t\mapsto \tilde{\mathcal{A}}_K(\cdot;t)$ is continuous as a map of $[0,1]$ to $\mathbf{W}^{-1}_p(K)$.
\end{lemma}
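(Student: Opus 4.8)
The plan is to read off both assertions directly from the definition \eqref{eq:dual_decomp} of the negative-order space $\mathbf{W}^{-1}_p(K)$ together with the assumed $\mathbf{L}_{p,\mathrm{loc}}$-regularity of the coefficients; no analysis beyond H\"older's inequality is required.

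First I would fix $t$ and observe that $\tilde{\mathcal{A}}_K(\cdot;t)$ is \emph{already} presented in exactly the distributional form \eqref{eq:dual_decomp} that defines $\mathbf{W}^{-1}_p(K)$. Writing $\langle\cdot,\cdot\rangle$ for the $\mathbf{L}_2$-pairing, one has
\[
\tilde{\mathcal{A}}_K(\varphi;t) = \langle \varphi, \eta(t,\cdot)\rangle - \sum_{j=1}^2 \langle \partial_{x^j}\varphi, -\gamma^j(t,\cdot)\rangle ,
\]
so that the choice $u_0 \df \eta(t,\cdot)$ and $u_{e_j} \df -\gamma^j(t,\cdot)$ realizes $\tilde{\mathcal{A}}_K(\cdot;t)$ as a member of $\mathbf{W}^{-1}_p(K)$. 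Because $K$ is bounded, the restrictions of $\gamma^j(t,\cdot)$ and $\eta(t,\cdot)$ — which lie in $\mathbf{L}_{p,\mathrm{loc}}$ — belong to $\mathbf{L}_p(K)$, so every $u_\alpha\in\mathbf{L}_p(K)$. The minimizing definition of the $\mathbf{W}^{-1}_p(K)$-norm then yields at once the quantitative bound
\[
\|\tilde{\mathcal{A}}_K(\cdot;t)\|_{\mathbf{W}^{-1}_p(K)} \leq \|\eta(t,\cdot)\|_{\mathbf{L}_p(K)} + \sum_{j=1}^2 \|\gamma^j(t,\cdot)\|_{\mathbf{L}_p(K)} < \infty .
\]
Since $\mathbf{W}^{-1}_p(K)$ is, by construction and the conjugacy of $p$ and $p'$, a space of bounded linear functionals on $\mathbf{W}^1_{p',0}(K)$ (finiteness of each pairing also following from H\"older's inequality), this establishes the first assertion.

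For the continuity claim I would exploit the linearity of $\varphi\mapsto\tilde{\mathcal{A}}_K(\varphi;t)$ in the coefficients $\gamma^j,\eta$. For $s,t\in[0,1]$ the difference $\tilde{\mathcal{A}}_K(\cdot;t)-\tilde{\mathcal{A}}_K(\cdot;s)$ is again a pairing of the same form, now with coefficients $\gamma^j(t,\cdot)-\gamma^j(s,\cdot)$ and $\eta(t,\cdot)-\eta(s,\cdot)$. Applying the estimate from the previous step to this difference gives
\[
\|\tilde{\mathcal{A}}_K(\cdot;t)-\tilde{\mathcal{A}}_K(\cdot;s)\|_{\mathbf{W}^{-1}_p(K)} \leq \|\eta(t,\cdot)-\eta(s,\cdot)\|_{\mathbf{L}_p(K)} + \sum_{j=1}^2 \|\gamma^j(t,\cdot)-\gamma^j(s,\cdot)\|_{\mathbf{L}_p(K)} .
\]
Letting $t\to s$, each term on the right tends to zero, since $t\mapsto\gamma^j(t,\cdot)$ and $t\mapsto\eta(t,\cdot)$ are continuous into $\mathbf{L}_{p,\mathrm{loc}}$ and $\bar{K}$ is compact; this proves continuity of $t\mapsto\tilde{\mathcal{A}}_K(\cdot;t)$ into $\mathbf{W}^{-1}_p(K)$.

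I do not anticipate a genuine obstacle: the lemma is essentially a bookkeeping statement converting coefficient regularity into functional-analytic regularity of the induced functional. The only point requiring slight care is matching the sign and normalization conventions of \eqref{eq:dual_decomp} so that the explicit decomposition above is admissible; since the norm depends only on the $\mathbf{L}_p(K)$-norms of the $u_\alpha$, the signs are immaterial to the final estimates.
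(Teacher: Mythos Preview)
Your proposal is correct and follows essentially the same route as the paper: both arguments recognize that $\tilde{\mathcal{A}}_K(\cdot;t)$ is given by $\mathbf{L}_p(K)$ coefficients, use H\"older (or equivalently the defining decomposition \eqref{eq:dual_decomp}) to obtain the bound $\|\tilde{\mathcal{A}}_K(\cdot;t)\|_{\mathbf{W}^{-1}_p(K)}\le \|\eta(t,\cdot)\|_{\mathbf{L}_p(K)}+\sum_j\|\gamma^j(t,\cdot)\|_{\mathbf{L}_p(K)}$, and then apply the same estimate to the difference to deduce continuity from the assumed $\mathbf{L}_{p,\mathrm{loc}}$-continuity of the coefficients. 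The only cosmetic difference is that the paper phrases boundedness via H\"older's inequality and invokes the isometric identification of $(\mathbf{W}^1_{p',0}(K))^*$ with $\mathbf{W}^{-1}_p(K)$, whereas you read off the norm bound directly from the minimizing definition of $\|\cdot\|_{\mathbf{W}^{-1}_p(K)}$.
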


\begin{proof}
By the triangle inequality and the H\"{o}lder inequality, for each $t\in[0,1]$,
\begin{equation*}
 \begin{aligned}
\left| \tilde{\mathcal{A}}_K(\varphi;t)\right| &\leq \int_K \sum_{j=1}^2 \left|\gamma^j(t,x) \frac{\partial \varphi}{\partial x^j}\right| + |\eta(t,x)\varphi|\ \Id x\\
&\leq \| \varphi\|_{\mathbf{W}^1_{p'}(K)}\left(\sum_{j=1}^2 \|\gamma^j(t,\cdot) \|_{\mathbf{L}_p(K)} + \|\eta(t,\cdot)\|_{\mathbf{L}_p(K)}\right),
\end{aligned}
\end{equation*}
which implies the boundedness of the linear functional $\tilde{\mathcal{A}}_K(\cdot;t)$.

To prove the continuity of the map $t\mapsto \tilde{\mathcal{A}}_K(\cdot;t)$ of $[0,1]$ to $\mathbf{W}^{-1}_{p}(K)$, observe that, for each $t\in[0,1]$, $\tilde{\mathcal{A}}_K(\cdot;t)$ is in the dual space of $\mathbf{W}^1_{p',0}(K)$. We recall that the dual space of $\mathbf{W}^1_{p',0}(K)$ is isometrically isomorphic to $\mathbf{W}^{-1}_p(K)$. It follows that
\begin{equation*}
\begin{aligned}
 &\left\| \tilde{\mathcal{A}}_K(\cdot;t) - \tilde{\mathcal{A}}_K(\cdot;u)\right\|_{\mathbf{W}^{-1}_p(K)}\\
 \leq &\sum_{j=1}^2 \|\gamma^j(t,\cdot) - \gamma^j(u,\cdot)\|_{\mathbf{L}_p(K)}
 + \|\eta(t,\cdot) - \eta(u,\cdot)\|_{\mathbf{L}_p(K)},
 \end{aligned}
\end{equation*}
which implies the desired continuity of the map $t\mapsto \tilde{\mathcal{A}}_K(\cdot;t)$ by the continuity of the maps $t\mapsto\gamma^j(t,\cdot),t\mapsto\eta(t,\cdot)$ of $[0,1]$ to $\mathbf{L}_{p,\text{loc}}$.
\end{proof}

% \begin{lemma}
% \item The function $v^i_j = v^i_j(t,x)$, defined by
%   $v^i_j\df\partial_{x^j}v^i$, satisfies the non-homogeneous parabolic partial
%   differential equation 
% \begin{equation*}
%  \frac{\partial v^i_j}{\partial t} + A(t)v^i_j + B_j(t)v^i = 0, \quad t\in(0,1).
% \end{equation*}
% \end{lemma}

\begin{proof}
% The assertions for $v = v(t,x)$ in items \ref{thm:rank_reg} and
% \ref{thm:rank_cauchy} follow immediately from Corollary
% \ref{cor:PDE_deriv} and Corollary \ref{cor:PDE_2} upon making the time change $t\to 1-t$.
For the proof of Theorem \ref{thm:rank} we define
\begin{equation*}
 w(t,x) \df |J[f,v]|(t,x), \qquad (t,x)\in [0,1]\times\R^2.
\end{equation*}
The claim of the theorem is true if and only if the set
\begin{equation*}
 G \df\{(t,x)\in [0,1]\times\R^2 : w(t,x) = 0\}
\end{equation*}
has Lebesgue measure zero on $[0,1]\times\R^2$. This is equivalent to the set
\begin{equation*}
 H \df\{x\in \R^2 : \int_0^1 1_G(t,x)\ \Id t > 0\}
\end{equation*}
having Lebesgue measure zero on $\R^2$.

From Corollary \ref{cor:PDE_2} and \ref{aaas:forw_f_generic} we deduce that, for every $p\geq 1$, the map $t\mapsto e^{-N\phi(\cdot)}w(t,\cdot)$ is analytic as a map of $(0,1)$ to $\mathbf{W}^1_p$. Moreover, by Sobolev's embedding theorems, for $p>2$, it is also analytic as a map of $(0,1)$ to $\mathbf{C}$. Suppose, for a contradiction, that
\begin{equation*}
\int_{\R^2} 1_H(t,x)\ \Id x >0.
\end{equation*}
From the analyticity of $t\mapsto w(t,\cdot)$, it follows that if $x\in H$ then $w(t,x)=0$ for all $t\in (0,1)$ and therefore that
\begin{equation*}
 \lim_{t\uparrow 1}w(t,x) = 0, \qquad x\in H.
\end{equation*}

We first proof the claim of the theorem assuming $J[f,g](1,x)$ has full rank almost everywhere on $\R^2$. From Corollary \ref{cor:PDE_2} and \ref{aaas:forw_f_generic} we know that, for every $p\geq 1$, the map $t\mapsto
e^{-N\phi(\cdot)}w(t,\cdot)$ of $[0,1]$ to $\mathbf{L}_p$ is
continuous. It follows that the map $t\mapsto w(t,\cdot)$ of $[0,1]$ to $\mathbf{L}_{p,\text{loc}}$ is continuous and hence that, for all open, bounded sets $K$ in $\R^2$,
\begin{equation*}
 \|w(t,\cdot) - w(1,\cdot)\|_{\mathbf{L}_{p}(K)} \to 0, \qquad t\uparrow 1.
\end{equation*}
% Taking $p>d$, by Sobolev's embedding theorem, the continuity of the map $t\mapsto
% e^{-N\phi(\cdot)}w(t,\cdot)$ from $[0,1)$ to $\mathbf{W}^1_p$,
% established in item \ref{thm:rank_reg}, implies
% its continuity from $[0,1)$ to $\mathbf{C}$.
We deduce that $w(1,\cdot) = |J[f,v]|(1,\cdot) = 0$ almost everywhere. Now recall that
by \ref{as:tc_sing_onedim_str} the matrix-function $J[f,v](1,\cdot) = J[f,g](1,\cdot)$ has full rank almost everywhere on $\R^2$.

% Let us now assume that \ref{as:PDE_1} and \ref{as:tc_sing_multdim} hold. We define
% \begin{multline*}
% h(t,x) \df - \sum_{l=1}^{d_B} \det V(l;(A^{'}_jv^i)_{i=1,\ldots,d_B,
%   j=d_F+1,\ldots,d})(t,x)\\
% +\sum_{\varsigma \in S_2}\sum_{j_1,j_2=1}^d\sum_{\substack{l_1,l_2 = 1\\ l_1<l_2}}^{d_B}
%  \left(a^{{j_1}{j_2}}\det  V\left(l_1,l_2;\frac{\partial V}{\partial
%      x^{j_{\varsigma(1)}}},\frac{\partial V}{\partial
%      x^{j_{\varsigma(2)}}}\right)\right)(t,x),
% \end{multline*}
% for $(t,x)\in[0,1]\times\R^d$.
% 
% From Lemma \ref{lem:PDE_det} it follows that, if $w(t,x)=0$ for all
% $(t,x)\in(0,1)\times\R^d$, then also $h(t,x)=0$ for all
% $(t,x)\in(0,1)\times\R^d$. Therefore,
% \begin{equation*}
%  \lim_{t\uparrow 1}h(t,x) = 0,\qquad x\in\R^d.
% \end{equation*}
% 
% Assumption \ref{as:PDE_1} and item \ref{thm:rank_reg} imply that the
% coefficients of $A^{'}_j(t)$ and the functions $a^{j_1j_2}=a^{j_1j_2}(t,x)$ are continuous functions on
% $[0,1]\times\R^d$ and, for every $p\geq
% 1$, the map $t\mapsto e^{-N\phi(\cdot)}v^i(t,\cdot)$
% is continuous as a map of $[0,1]$ to $\mathbf{W}^2_p$. We readily
% deduce that, for every $p\geq
% 1$, also the map $t\mapsto h(t,\cdot)$
% from $[0,1]$ to $\mathbf{L}_{p,\text{loc}}$ is continuous. Arguing as above we conclude that $h(1,\cdot) = 0$ almost everywhere, which contradicts \ref{as:tc_sing_multdim}.
% 
Let us now assume that for every open, bounded set $K$ in $\R^2$ there exists a test function $\varphi = \varphi(x)$ belonging to $\mathbf{W}^1_{p',0}(K)$ such that $\mathcal{A}_K[g,\varphi;1]\neq 0$. From Corollary \ref{cor:PDE_2} and \ref{aaas:forw_f_generic} we know that the functions $f=f(t,x)$ and $v=v(t,x)$ satisfy the differentiability hypothesis of Lemma \ref{lem:PDE_det} and from Corollary \ref{cor:PDE_deriv} that $v_j$ satisfies the partial differential equation \eqref{eq:lem_PDE}. It follows from \eqref{eq:some_cond} that if $w(t,x) = 0$ for all $(t,x)\in (0,1)\times H$, then also $\mathcal{P}(t)v=0$ for all $(t,x)\in (0,1)\times H$.

From Corollary \ref{cor:PDE_2} we know that, for every $p\geq 1$, $t\mapsto e^{-N\phi(\cdot)}v(t,\cdot)$ is a continuous map of $[0,1]$ to $\mathbf{W}^1_p$. In particular, for every $p> 1$, $t\mapsto v(t,\cdot)$ and $t\mapsto \partial_{x^j}v(t,\cdot)$ are continuous maps of $[0,1]$ to $\mathbf{L}_{p,\text{loc}}$. Assumptions \ref{as:PDE_1} and \ref{aaas:forw_f_generic} imply that also $t\mapsto A^{jk}(t,\cdot), t\mapsto B^j(t,\cdot), t\mapsto C(t,\cdot), t\mapsto \partial_{x^k}A^{jk}(t,\cdot)$ are continuous maps of $[0,1]$ to $\mathbf{L}_{p,\text{loc}}$, for every $p> 1$. It follows from Lemma \ref{lem:dual_cont} that for any open, bounded set $K^{'}\subset H$, for any test function $\varphi = \varphi(x)$ of class $\mathbf{W}^1_{p^{'},0}(K^{'})$ and for any fixed $t\in[0,1]$ the pairing $\mathcal{A}_{K^{'}}[v,\cdot;t]$ is a bounded, linear functional on
$\mathbf{W}^1_{p^{'},0}(K^{'})$. Actually, for $t\in(0,1)$,
$\mathcal{A}_{K^{'}}[v,\varphi;t] = 0$ is the weak formulation of the
partial differential equation $\mathcal{P}(t)v = 0$. It follows that
$\mathcal{A}_{K^{'}}[v,\varphi;t] = 0$ for all $t \in (0,1)$ and every
$\varphi=\varphi(x)$ belonging to $\mathbf{W}^1_{p^{'},0}(K^{'})$ and therefore that
\begin{equation*}
 \lim_{t\uparrow 1}\mathcal{A}_{K^{'}}[v,\varphi;t] = 0, \qquad \varphi \in \mathbf{W}^1_{p^{'},0}(K^{'}).
\end{equation*}

% Assumption \ref{as:PDE_1} and Corollary \ref{cor:PDE_2} imply that the coefficients
% of $\mathcal{Q}(t)$ and also the functions $\partial_{x^k}A^{jk}
% = \partial_{x^k}A^{jk}(t,x)$ are continuous functions on $[0,1]\times\R^2$ and, for every $p\geq
% 1$, the map $t\mapsto v(t,\cdot)$
% is continuous as a map of $[0,1]$ to $\mathbf{W}^1_{p,\text{loc}}$.
Also from Lemma \ref{lem:dual_cont} we know that the map $t\mapsto \mathcal{A}_{K^{'}}[v,\cdot;t]$ is continuous as a map of $[0,1]$ to $\mathbf{W}^{-1}_{p}(K^{'})$ and therefore that
\begin{equation*}
\| \mathcal{A}_{K^{'}}[v,\cdot;t] - \mathcal{A}_{K^{'}}[v,\cdot;1]
\|_{\mathbf{W}^{-1}_p(K^{'})} \to 0, \qquad t\uparrow 1.
\end{equation*}
It follows that $\mathcal{A}_{K^{'}}[v,\varphi;1] = \mathcal{A}_{K^{'}}[g,\varphi;1] =0$ for every $\varphi \in
\mathbf{W}^1_{p^{'},0}(K^{'})$. Now recall that by \ref{as:tc_sing_onedim_str} for every open, bounded set $K$ and some $p>1$ there exists a test function $\varphi \in
\mathbf{W}^1_{p^{'},0}(K)$ such that $\mathcal{A}_K[g,\varphi;1] \neq 0$.
\end{proof}

\section{Proof of Theorem \ref{thm:FBMR}}\label{sec:proof}
From here onwards we adopt the notation introduced in Section \ref{sec:MR} and assume that conditions \ref{as:drift_vol_coeffs}, \ref{as:tc_dc_coeffs}, \ref{as:A3} and \ref{aas:tc_sing_onedim} are in place.

We fix a function $\phi=\phi(x)$ on $\R^2$ satisfying $\eqref{eq:test_function}$ and recall that $\mathcal{L}^X(t)$, $t\in[0,1]$, is the infinitesimal generator of the process $X$:

\begin{lemma}\label{lem:full_rank}
There exists a unique continuous function $v=v(t,x)$, on
$[0,1]\times\R^2$ and a constant $N\geq 0$ such that the following hold:
\begin{enumerate}[leftmargin=*,label={\arabic{enumi}.}]
\item\label{thm:rank_reg} For every $p\geq 1$, 
\begin{enumerate}[leftmargin=*,label={(\alph{enumii}})]
% \item $t\mapsto e^{-N\phi(\cdot)}v(t,\cdot)$ is H\"{o}lder
%   continuous as a map of $[0,1]$
%   to $\mathbf{L}_p$; its restriction on $[0,1)$ is continuously differentiable,\\
\item $t\mapsto e^{-N\phi(\cdot)}v(t,\cdot)$ is a continuous map of $[0,1]$ to $\mathbf{W}^1_p$,\\
\item $t\mapsto e^{-N\phi(\cdot)}v(t,\cdot)$ is an analytic map of $(0,1)$ to $\mathbf{W}^2_p$,\\
\item $t\mapsto e^{-N\phi(\cdot)}v(t,\cdot)$ is a $p$-integrable map of $[0,1)$ to $\mathbf{W}^3_p$,\\
\item $t\mapsto e^{-N\phi(\cdot)}\partial_tv(t,\cdot)$ is a $p$-integrable map of $[0,1)$ to $\mathbf{W}^1_p$,\\
%\item the function $e^{-N\phi}v$ is analytic on $(0,1)\times\R^d$.\\
\end{enumerate}
\item The function $v=v(t,x)$ solves the homogeneous Cauchy problem
\begin{align}
  \frac{\partial v}{\partial t} + (\mathcal{L}^X(t) -r)v &= 0, \quad t\in[0,1),\label{eq:cp}\\
 v(1,\cdot) &= g.\label{eq:cp_tc}
\end{align}
% and $v^i_j=v^i_j(t,x)$, defined by $v^i_j\df \partial_{x^j}v^i$,
% satisfies the nonhomogenous partial differential equation
% \begin{equation*}
%  \frac{\partial v^i_j}{\partial t} + A(t)v^i_j + B_j(t)v^i = 0, \quad t\in(0,1).\\
% \end{equation*}
\item \label{lem:full_rank_jac} The Jacobian matrix-function $J[f,v]=J[f,v](t,x)$,
%defined by
%\begin{equation}\label{eq:matrix_full_rank}
%V(t,x) \df \frac{\partial v}{\partial x^d}(t,x),
%\end{equation}
has full rank almost everywhere with respect to the Lebesgue measure on $[0,1]\times\R^2$.
\end{enumerate}
\end{lemma}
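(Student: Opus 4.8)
The plan is to obtain Lemma \ref{lem:full_rank} as essentially a direct transcription of the abstract machinery of Section \ref{sec:PDE} and Section \ref{sec:Jacobian} into the concrete setting of Section \ref{sec:MR}, by checking that the hypotheses \ref{as:drift_vol_coeffs}--\ref{aas:tc_sing_onedim} imply the more abstract hypotheses \ref{as:PDE_1}, \ref{as:ic_weighted}, \ref{aaas:forw_f_generic} and \ref{as:tc_sing_onedim_str}. The key observation is that the operator $\mathcal{L}^X(t) - r$ appearing in \eqref{eq:cp} is precisely an operator of the form $\mathcal{G}(t)$ in \eqref{eq:PDE_operator_ND}, once we identify the coefficients by setting $a^{jk} = \tfrac12 a^{jk}_{(\text{cov})} = \tfrac12 (\sigma\sigma^\star)^{jk}$, $b^j$ the drift of $X$, and $c \df -r$. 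Under this dictionary, the nonpositivity of $c$ required in \ref{as:PDE_1} is exactly the nonnegativity of $r$ assumed in \ref{as:tc_dc_coeffs}, and the uniform ellipticity demanded in \ref{as:PDE_1} follows from the bounded-inverse condition \eqref{eq:uniform_ell} via the Remark following assumption \ref{as:drift_vol_coeffs}.

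\textbf{First}, I would verify the regularity translation. The temporal H\"older continuity, analyticity on $(0,1)$, and spatial $\mathbf{C}^2$/$\mathbf{C}^1$ continuity of $a^{jk}$, $b^j$ demanded in \ref{as:PDE_1} are inherited directly from the corresponding statements about $\sigma^{ij}$ and $b^j$ in \ref{as:drift_vol_coeffs}: since $a = \sigma\sigma^\star$ is a smooth (bilinear) function of the entries of $\sigma$, the map $t \mapsto a^{jk}(t,\cdot)$ enjoys the same time-regularity into $\mathbf{C}^2$ that $\sigma$ enjoys, and products and sums of real-analytic Banach-space-valued maps are real analytic. The coefficient $c = -r$ inherits its properties from \ref{as:tc_dc_coeffs}. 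Assumption \ref{as:A3} is literally \ref{aaas:forw_f_generic}, and \ref{aas:tc_sing_onedim} is literally \ref{as:tc_sing_onedim_str} (the pairing $\mathcal{B}_K$ differs from $\mathcal{A}_K$ only by the harmless factor $\tfrac12$ coming from the $\tfrac12$ in $\mathcal{L}^X$, and the weighted-growth bound on $\nabla g$ in \ref{aas:tc_sing_onedim} is exactly the hypothesis \ref{as:ic_weighted} needed to invoke Corollary \ref{cor:PDE_2}). Having matched all hypotheses, items \ref{thm:rank_reg} and~2 of the lemma follow immediately from Corollary \ref{cor:PDE_2} applied to the operator $\mathcal{G}(t) = \mathcal{L}^X(t) - r$ and initial datum $g$, and item~\ref{lem:full_rank_jac} follows immediately from Theorem \ref{thm:rank}. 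Uniqueness is part of the conclusion of Corollary \ref{cor:PDE_2}.

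\textbf{The one genuine point requiring care} is the consistency of the definitions of $A$, $B$, $C$ across the two sections. In Section \ref{sec:MR} these are built from $\mathcal{L}^X(t)$ and $r$, whereas in Section \ref{sec:Jacobian} they are built from $\mathcal{G}(t)$ and $c$; I must check that under the identification $\mathcal{G}(t) = \mathcal{L}^X(t) - r$ the two triples $(A^{jk}, B^j, C)$ coincide so that the pairing $\mathcal{B}_K[\cdot,\cdot;1]$ used in \ref{aas:tc_sing_onedim} is the same object as $\mathcal{A}_K[\cdot,\cdot;1]$ used in \ref{as:tc_sing_onedim_str} (up to the overall $\tfrac12$). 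This amounts to matching the two $a^{jk}$ conventions: in Section \ref{sec:PDE} the generator $\mathcal{G}$ has $a^{jk}$ (no $\tfrac12$) as its second-order coefficient, while $\mathcal{L}^X$ carries $\tfrac12 a^{jk}$, which is exactly why the pairing $\mathcal{B}_K$ carries the explicit $\tfrac12$ factors and $\mathcal{A}_K$ does not. I expect this bookkeeping to be the main, though routine, obstacle; once it is settled, the lemma is proved by direct citation.

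\begin{proof}
Identify the operator $\mathcal{L}^X(t) - r$ with an operator $\mathcal{G}(t)$ of the form \eqref{eq:PDE_operator_ND}, taking as second-order coefficients $\tfrac12 a^{jk}$, as first-order coefficients $b^j$, and as zeroth-order coefficient $c \df -r$. We verify that conditions \ref{as:PDE_1}, \ref{as:ic_weighted}, \ref{aaas:forw_f_generic} and \ref{as:tc_sing_onedim_str} hold for this choice, so that Corollary \ref{cor:PDE_2} and Theorem \ref{thm:rank} apply.

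For \ref{as:PDE_1}: since $a = \sigma\sigma^\star$, each entry $a^{jk}$ is a finite sum of products of entries of $\sigma$; as the maps $t\mapsto\sigma^{ij}(t,\cdot)$ are H\"older continuous, analytic on $(0,1)$, and continuous into $\mathbf{C}^2$ by \ref{as:drift_vol_coeffs}, and as these properties are preserved under multiplication and addition in the Banach algebra $\mathbf{C}^2$, the maps $t\mapsto a^{jk}(t,\cdot)$ enjoy the same properties. The maps $t\mapsto b^j(t,\cdot)$ and $t\mapsto c(t,\cdot) = -r(t,\cdot)$ inherit their required regularity from \ref{as:drift_vol_coeffs} and \ref{as:tc_dc_coeffs}. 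Symmetry of $a$ is clear, and uniform ellipticity follows from \eqref{eq:uniform_ell} as recorded in the Remark following \ref{as:drift_vol_coeffs}. Finally $c = -r \leq 0$ by the nonnegativity of $r$ in \ref{as:tc_dc_coeffs}.

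For \ref{as:ic_weighted}: the bound $|\partial_{x^j}g(x)| \leq e^{N(1+|x|)}$ in \ref{aas:tc_sing_onedim} gives, with a possibly enlarged constant $N$, that $e^{-N|\cdot|}\partial_{x^j}g \in \mathbf{L}_\infty$, which is \ref{as:ic_weighted}. Condition \ref{aaas:forw_f_generic} is identical to \ref{as:A3}. For \ref{as:tc_sing_onedim_str}: the triples $(A^{jk},B^j,C)$ defined in Section \ref{sec:MR} and in Section \ref{sec:Jacobian} agree under our identification, the only discrepancy being the factor $\tfrac12$ arising because $\mathcal{L}^X(t)$ carries $\tfrac12 a^{jk}$ as its second-order coefficient while $\mathcal{G}(t)$ carries $a^{jk}$. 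This same factor is precisely what distinguishes the pairing $\mathcal{B}_K$ from $\mathcal{A}_K$; consequently $\mathcal{B}_K[g,\varphi;1] = \tfrac12\,\mathcal{A}_K[g,\varphi;1]$ for every admissible $\varphi$, and the two conditions are equivalent. Thus \ref{aas:tc_sing_onedim} yields \ref{as:tc_sing_onedim_str}.

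Having verified all hypotheses, Corollary \ref{cor:PDE_2} furnishes a unique continuous function $v=v(t,x)$ on $[0,1]\times\R^2$ and a constant $N\geq 0$ satisfying the regularity statements of item~\ref{thm:rank_reg} and solving the Cauchy problem \eqref{eq:cp}--\eqref{eq:cp_tc}, which is exactly \eqref{eq:cp_PDE}--\eqref{eq:cp_PDE_tc} for this $\mathcal{G}(t)$. Item~\ref{lem:full_rank_jac} is then precisely the conclusion of Theorem \ref{thm:rank} applied to this $v$.
\end{proof}
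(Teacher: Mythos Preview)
Your proposal is correct and follows exactly the paper's own approach: verify that \ref{as:drift_vol_coeffs}--\ref{aas:tc_sing_onedim} imply \ref{as:PDE_1}, \ref{as:ic_weighted}, \ref{aaas:forw_f_generic}, \ref{as:tc_sing_onedim_str}, and then cite Corollary \ref{cor:PDE_2} and Theorem \ref{thm:rank}. The paper's proof is a two-line version of precisely this argument.

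One small bookkeeping correction: your claim that $\mathcal{B}_K[g,\varphi;1]=\tfrac12\,\mathcal{A}_K[g,\varphi;1]$ is not quite right. Under the identification $a_{(4)}^{jk}=\tfrac12 a_{(2)}^{jk}$, $b_{(4)}^j=b_{(2)}^j$, $c=-r$, the Section~\ref{sec:Jacobian} triple satisfies $A_{(4)}^{jk}=\tfrac12 A_{(2)}^{jk}$, but $B_{(4)}^j=B_{(2)}^j$ (no factor $\tfrac12$) and $C_{(4)}=|J[f,c]|=-|J[f,r]|=-C_{(2)}$. Plugging into the two pairings, the $\tfrac12$'s already written into $\mathcal{B}_K$ and the sign flip in the $C$-term exactly compensate, so in fact $\mathcal{A}_K[g,\varphi;1]=\mathcal{B}_K[g,\varphi;1]$. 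This does not affect your conclusion that the two nonvanishing conditions are equivalent, but the ``overall factor $\tfrac12$'' narrative is inaccurate.
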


% \begin{lemma}\label{lem:PDE_soln}
% There exist unique, continuous functions $v^i=v^i(t,x)$, $i\in I_B$, on $[0,1]\times\R^d$ and a constant $N\geq 0$, such that
% \begin{enumerate}[label=\arabic{enumi}.]
% \item For $p\geq1$,\\
% \begin{enumerate}
% \item\label{thm:PDE_1_classical} $t\mapsto
%   e^{-N\phi}v^i(t,\cdot)$ are H\"{o}lder continuous maps of $[0,1]$
%   to $\mathbf{L}_p$, whose restriction on $(0,1]$ is continuously
%   differentiable,\\
% \item\label{thm:PDE_1_weakreg} $t\mapsto e^{-N\phi}v^i(t,\cdot)$ are continuous maps of $[0,1]$ to
%   $\mathbf{W}^1_p$,\\
% \item the functions $e^{-N\phi}v^i$ are analytic on $(0,1)\times\R^d$.\\
% \end{enumerate}
%  \item For each $i\in I_B$, the function $v^i = v^i(t,x)$ solves the Cauchy problem
% \begin{align}
%  \frac{\partial v^i}{\partial t} + \mathcal{L}(t)v^i + \alpha v^i &= 0, \quad t\in[0,1),\label{eq:cp}\\
%  v^i(1,\cdot) &= F^i\label{eq:cp_tc}.
% \end{align}
% \end{enumerate}
% \end{lemma}

Hereafter we denote by $v=v(t,x)$, the function defined in Lemma \ref{lem:full_rank}.

\begin{proof}
 Observe that \ref{as:drift_vol_coeffs}, \ref{as:tc_dc_coeffs}, \ref{as:A3} and
 \ref{aas:tc_sing_onedim} imply \ref{as:PDE_1}, \ref{as:ic_weighted}, \ref{aaas:forw_f_generic} and\ref{as:tc_sing_onedim_str} on the
 corresponding coefficients in Theorem \ref{thm:rank}. The assertions for $v$ and $J[f,v]$ now follow directly from Theorem \ref{thm:rank}.
\end{proof}

\begin{lemma}\label{lem:backward_construction}
The martingale
\begin{equation*}
 S^B_t \df \E[\psi|\mathcal{F}_t],
\end{equation*}
is well-defined and has the representation
\begin{equation}\label{eq:backward_asset_defn}
 S^B_t = v(t,X_t)e^{-\int_0^t r(u,X_u)\ \Id u}.
\end{equation}
Moreover, for $t\in (0,1)$,
\begin{equation}\label{eq:cond_exp_evol}
 \Id S^B_t = e^{-\int_0^t r(u,X_u)\ \Id
   u}(\nabla_x v\sigma)(t,X_t)\ \Id W_t.
\end{equation}
% where
% \begin{equation}\label{eq:Q_BM}
% W^{\mathbb{Q},k}_t \df W^k_t -\int_0^t \gamma^k(u,X_u)\ \Id u
% \end{equation}
% is a Brownian motion under the probability measure $\mathbb{Q}$. 
\end{lemma}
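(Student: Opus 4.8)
The statement to prove is Lemma~\ref{lem:backward_construction}, which asserts three things about $S^B_t\df\E[\psi\mid\mathcal{F}_t]$: that it is well-defined, that it admits the representation \eqref{eq:backward_asset_defn} in terms of the function $v$ from Lemma~\ref{lem:full_rank}, and that its stochastic differential on $(0,1)$ is given by \eqref{eq:cond_exp_evol}. The natural strategy is to apply the Feynman--Kac correspondence to the Cauchy problem \eqref{eq:cp}--\eqref{eq:cp_tc} that $v$ solves, and then to read off the martingale property and its integrand via It\^{o}'s formula applied to the discounted process $v(t,X_t)e^{-\int_0^t r\,\Id s}$.

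First I would verify integrability, so that $S^B$ is well-defined. Since $\psi=g(X_1)e^{-\int_0^1 r(t,X_t)\,\Id t}$ with $r\geq 0$ by \ref{as:tc_dc_coeffs}, the discount factor is bounded by $1$, so it suffices to control $\E|g(X_1)|$. From \ref{aas:tc_sing_onedim} (equivalently \ref{as:ic_weighted}) the gradient of $g$, and hence $g$ itself, grows at most exponentially: $|g(x)|\leq N e^{N|x|}$ after integrating the bound on $\nabla g$. Because the coefficients $b,\sigma$ are bounded by \ref{as:drift_vol_coeffs}, the solution $X$ of \eqref{eq:factor_process} has Gaussian-type tails and finite exponential moments of all orders, so $\E|g(X_1)|<\infty$ and $S^B$ is a genuine (integrable) process.

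Next I would establish \eqref{eq:backward_asset_defn}. The idea is the standard probabilistic representation: define $Y_t\df v(t,X_t)e^{-\int_0^t r(u,X_u)\,\Id u}$ and apply It\^{o}'s formula. The regularity furnished by Lemma~\ref{lem:full_rank}(1) — $v$ is, after the weight $e^{-N\phi}$, continuous into $\mathbf{W}^1_p$, $p$-integrable into $\mathbf{W}^3_p$ and with $\partial_t v$ $p$-integrable into $\mathbf{W}^1_p$ — together with Sobolev embedding (for $p>2$) gives enough classical differentiability of $v$ on $(0,1)\times\R^2$ to justify It\^{o}'s formula (in a generalized/Krylov sense to handle the weak derivatives). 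The drift term collapses because $v$ solves $\partial_t v+(\mathcal{L}^X(t)-r)v=0$ on $[0,1)$: computing $\Id Y_t$ yields
\begin{equation*}
\Id Y_t = e^{-\int_0^t r\,\Id u}\Bigl(\partial_t v + \mathcal{L}^X(t)v - r v\Bigr)(t,X_t)\,\Id t + e^{-\int_0^t r\,\Id u}(\nabla_x v\,\sigma)(t,X_t)\,\Id W_t,
\end{equation*}
and the bracketed drift vanishes by \eqref{eq:cp}. Hence $Y$ is a local martingale with terminal value $Y_1=g(X_1)e^{-\int_0^1 r\,\Id u}=\psi$ (using the terminal condition \eqref{eq:cp_tc}). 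Combined with the integrability from the first step and a uniform-integrability/localization argument, $Y$ is a true martingale, so $Y_t=\E[\psi\mid\mathcal{F}_t]=S^B_t$, giving \eqref{eq:backward_asset_defn}; and the diffusion term just computed is precisely \eqref{eq:cond_exp_evol}.

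The main obstacle is the rigorous justification of It\^{o}'s formula, since $v$ is only weakly differentiable in the relevant Sobolev sense rather than $C^{1,2}$, and since the state space is unbounded with exponentially growing data (hence the weight $e^{-N\phi}$ throughout Corollary~\ref{cor:PDE_2}). I would address this by invoking an It\^{o}--Krylov-type formula valid for functions in $\mathbf{W}^{1,2}_{p}$ with $p$ large (so that the second spatial derivatives lie in $\mathbf{L}_p$ along the trajectory of the uniformly elliptic diffusion $X$), together with a localization argument on compact sets in $x$ and on $[0,1-\epsilon]$ to avoid the terminal time, passing $\epsilon\downarrow 0$ at the end using the continuity asserted in Lemma~\ref{lem:full_rank}(1a). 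Controlling the exponential weight against the finite exponential moments of $X$ is what makes the local martingale a true martingale and closes the argument.
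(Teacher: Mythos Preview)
Your proposal is correct and follows essentially the same approach as the paper: apply a Krylov-type It\^{o} formula to $v(t,X_t)e^{-\int_0^t r}$, use the PDE \eqref{eq:cp} to kill the drift and obtain \eqref{eq:cond_exp_evol}, and upgrade the local martingale to a true martingale via the Sobolev-embedding bound $|v(t,x)|\le e^{N(1+|x|)}$ together with the exponential moments of $\sup_{t\in[0,1]}|X_t|$. The only cosmetic difference is that the paper starts from the right-hand side of \eqref{eq:backward_asset_defn} and proves it is a martingale with terminal value $\psi$, rather than first checking $\E|\psi|<\infty$ separately.
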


\begin{proof}
% Assumption \ref{as:tc_dc_coeffs} implies, in particular, that
% \begin{equation*}
% \E\left[\exp\left( \frac{1}{2} \int_0^1 |\gamma (u,X_u)|^2\ \Id
%     u\right) \right] < \infty.
% \end{equation*}
% Hence, $Z$, defined by \eqref{eq:density}, is a strictly positive martingale and the probability
% measure $\mathbb{Q}$ is well-defined. By the strict
% positivity of $Z$, the measure $\mathbb{Q}$ is equivalent to
% $\mathbb{P}$. Girsanov's theorem now implies that $W^\mathbb{Q}$,
% as defined in \eqref{eq:Q_BM} is a Brownian motion under $\mathbb{Q}$.
% 
Assume that the process $S^B$ is actually \textit{defined} by
\eqref{eq:backward_asset_defn}. From the continuity of $v$ on
$[0,1]\times\R^2$ it follows that it is in fact a continuous
process on $[0,1]$ and from the expression \eqref{eq:cp_tc} for
$v(1,\cdot)$ that $S^B_1=\psi$. Hence, to complete the proof it
remains to show that $S^B$, given by
\eqref{eq:backward_asset_defn} is a martingale under the measure $\mathbb{P}$.

From Lemma \ref{lem:full_rank} we know that the map $t\mapsto e^{-N\phi(\cdot)}v(t,\cdot)$ is analytic as a map of $(0,1)$ to $\mathbf{W}^2_p$; in particular, it is continuously differentiable. This allows us to use a variant of the It\^{o} formula due to Krylov (cf. \cite[Section 2.10, Theorem 1]{nKrylov1980}) and accounting for \eqref{eq:cp}, we immediately obtain
% \begin{multline*}
% \Id S^i_t = -e^{\int_0^t \alpha(u,X_u)\ \Id u}\sum_{j,k=1}^d \left(\frac{\partial v^{i-d_F}}{\partial
%     x^j}\sigma^{jk}\gamma^k\right)(t,X_t)\ \Id t\\
% + e^{\int_0^t \alpha(u,X_u)\ \Id u}\sum_{j,k=1}^d\left(\frac{\partial v^{i-d_F}}{\partial x^j}\sigma^{jk}\right)(t,X_t)\ \Id W^k_t.
% \end{multline*}
% Making use of \eqref{eq:Q_BM} now immediately yields
\eqref{eq:cond_exp_evol}.

We have shown that $S^B$ is a continuous local martingale. It
only remains to verify the uniform integrability of the
process. Recall that, for every $p\geq1$, the map $t\mapsto
e^{-N\phi(\cdot)}v(t,\cdot)$ is continuous from $[0,1]$ to
$\mathbf{W}^1_p$. It follows from Sobolev's embedding theorem that,
for $p>2$, the same map is continuous from $[0,1]$ to
$\mathbf{C}$. Therefore,
\begin{equation*}
|v(t,x)| \leq e^{N(1+|x|)}.
\end{equation*}
In particular, accounting for the growth properties of $r=r(t,x)$,
\begin{equation*}
\sup_{t\in[0,1]}(|S^B_t|) \leq e^{N(1+\sup_{t\in [0,1]}|X_t|)}.
\end{equation*}
As $\sup_{t\in[0,1]}|X_t|$ has all exponential moments the martingale
property for $S^B$ follows.
\end{proof}

The proof of Theorem \ref{thm:FBMR} is now completed
easily. Equations \eqref{eq:FB_asset_evol} and \eqref{eq:cond_exp_evol} show that
\begin{equation}\label{eq:mart}
\left\{
\begin{aligned}
 \Id S^F_t &= e^{-\int_0^t r(u,X_u)\ \Id u}(\nabla_x f\sigma)(t,X_t)\ \Id W_t,\\
 \Id S^B_t &= e^{-\int_0^t r(u,X_u)\ \Id
   u}(\nabla_x v\sigma)(t,X_t)\ \Id W_t.
\end{aligned}
\right.
\end{equation}
% where the Brownian motion $W^\mathbb{Q}$ is defined by
% \eqref{eq:Q_BM}.
By the growth properties of $r = r(t,x)$, $f=f(t,x)$ and $\sigma=\sigma(t,x)$ in \ref{as:drift_vol_coeffs}, \ref{as:tc_dc_coeffs} and \ref{as:A3} it is easily verified that also the continuous local martingale $S^F = (S^F_t)$ is a true martingale.
% It follows that every $\Q$-local martingale $M$ is a stochastic integral with respect to $W^\Q$. Indeed, since $L\df ZM$ is a local martingale under $\mathbb{P}$, there exists a $\R^d$-valued predictable process $\zeta$, such that
% \begin{equation*}
%  L_t = L_0 + \int_0^t \zeta_u \Id W_u = L_0 + \sum_{j=1}^d \int_0^t \zeta^j_u \Id W^j_u.
% \end{equation*}
% Direct computations show that
% \begin{equation*}
% \Id M_t = \Id \frac{L_t}{Z_t} = \frac{1}{Z_t} \sum_{j=1}^d\left( \zeta^j_t - L_t
% \gamma^j(t,X_t)\right)\Id W^{\Q,j}_t.
% \end{equation*}

In view of \eqref{eq:mart} we obtain
\begin{equation*}
\Id S_t = e^{-\int_0^t r(u,X_u)\ \Id u} ( J[f,v]\sigma)(t,X_t)\ \Id W_t, \quad t\in[0,1].
\end{equation*}
% where $\mathbf{I}$ is the identity matrix of size $d-1$,
% $\mathbf{0}$ is the zero column vector of size $1\times (d-1)$ and
% $\mathbf{\nabla_x}v$ the gradient vector of the function $v=v(t,x)$.
We recall that, by the Brownian integral representation property, every $\mathbb{P}$-local martingale $M$ is a stochastic integral with respect to $W$:
\begin{equation*}
 \Id M_t = \tilde{H}_t\ \Id W_t,\quad t\in[0,1],
\end{equation*}
for some progressively measurable, square-integrable process $\tilde{H} = (\tilde{H}_t)$. Hence, in order to deduce the integral representation property \eqref{eq:int_rep} it remains to show that the matrix process
\begin{equation}\label{eq:vol_S}
(J[f,v]\sigma)(t,X_t), \quad t\in [0,1],
\end{equation}
% \begin{equation*}
% \left(\left(
% \begin{array}{c|c}
% \mathbf{I} & \mathbf{0}\\
% \multicolumn{ 2}{c}{\mathbf{\nabla_x}v}
% \end{array}
% \right) \sigma\right)  (t,X_t), \quad t\in [0,1],
% \end{equation*}
has full rank on $\Omega\times [0,1]$ almost surely under the product
measure $\Id \mathbb{P} \times \Id t$. From Lemma \ref{lem:full_rank} we know that the matrix-function $J[f,v] = J[f,v](t,x)$ has full rank almost everywhere under the Lebesgue measure on $[0,1]\times\R^2$. From the nonsingularity assumption in \ref{as:drift_vol_coeffs} we know that also the matrix-function $\sigma = \sigma(t,x)$ has full rank almost everywhere under the Lebesgue measure on $[0,1]\times\R^2$. The conclusion that \eqref{eq:vol_S} has full rank on $\Omega\times [0,1]$ almost surely now follows easily from the fact that under \ref{as:drift_vol_coeffs} the distribution of $X_t$ has a density under the Lebesgue measure on $\R^2$, see \cite[Theorem 9.1.9]{dStroock2006}.

\section{Example: a class of stochastic volatility models}\label{sec:example}
In this section we apply our main result Theorem \ref{thm:FBMR} to prove the completeness of a financial market in which one stock with price process $P=(P_t)$ and one call option with price process $V=(V_t)$ are traded. The processes $P$ and $V$ are defined by
\begin{equation}\label{eq:stoch_vol}
 \begin{aligned}
 \Id P_t &= rP_t\ \Id t + \nu(Y_t)P_t\ \Id W^1_t\\
 \Id Y_t &= (\alpha(m-Y_t) - \mu(P_t,Y_t))\ \Id t + \sigma(Y_t)\ \Id W_t\\
 V_t &= e^{-r(1-t)}\E[(P_1-\Gamma)^+|\mathcal{F}_t],
\end{aligned}
\end{equation}
for constants $\Gamma,\alpha,m,r\in\R$ with $\Gamma>0$, $r\geq 0$. In particular this covers the class of stochastic volatility models introduced in \cite[Equation (2.7), p.~43]{jFouque2000}.

The coefficients $\nu,\mu,\sigma^j:\R\to\R$, $j=1,2$, are assumed to satisfy the following condition:
\begin{myenumerateC}
 \item\label{as:stoch_vol} There exist constants $N,D,\rho,\epsilon>0$ such that for all $y\in \R$, $\nu(y)>N$ and $\sigma^j(y)>N$; the derivative $\Id \nu/\Id y(y) \neq 0$ almost everywhere on $\R$ and the functions $\nu$, $\sigma^j$ and $\mu(p,\cdot)$ are infinitely differentiable and satisfy
 \begin{equation*}
  \left|\frac{\partial^k\mu}{\partial y^k}(p,y)\right| + \left|\frac{\partial^k\nu}{\partial y^k}(y)\right| + \left|\frac{\partial^k\sigma^j}{\partial y^k}(y)\right|\leq \frac{Dk!}{(\rho+\epsilon|y|)^k}, \quad (p,y)\in \R\times\R.
 \end{equation*}
The function $\mu=\mu(p,y)$ has first and second continuous derivatives in $p$ and $y$ and $y(e^p)^l \partial_y^k\partial_p^l\mu\in\mathbf{L}_\infty$, $l=0,1$, $k=1,2$, $l+k\leq 2$.
\end{myenumerateC}

We are now ready to state the main result of this section.
\begin{theorem}\label{thm:stoch_vol}
Suppose that condition \ref{as:stoch_vol} is satisfied. Then the $(P_t,V_t)$-market defined by \eqref{eq:stoch_vol} is complete.
\end{theorem}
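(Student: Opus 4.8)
The plan is to deduce Theorem~\ref{thm:stoch_vol} from the general result Theorem~\ref{thm:FBMR} after a change of variables that turns the unbounded coefficients of \eqref{eq:stoch_vol} into bounded, analytic ones. First I would work in logarithmic price and a detrended volatility factor, setting $X^1_t\df\log P_t$ and $X^2_t\df e^{\alpha t}(Y_t-m)$. Then $\Id X^1_t=(r-\tfrac12\nu(Y_t)^2)\,\Id t+\nu(Y_t)\,\Id W^1_t$ and, since the affine time change cancels the mean-reverting drift, $\Id X^2_t=-e^{\alpha t}\mu(P_t,Y_t)\,\Id t+e^{\alpha t}\sigma(Y_t)\,\Id W_t$. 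In the coordinates $X=(X^1,X^2)$ the volatility matrix is lower triangular with diagonal entries $\nu$ and $e^{\alpha t}\sigma^2$, so $a=\sigma\sigma^\star$ has determinant $e^{2\alpha t}\nu^2(\sigma^2)^2$ bounded away from zero; the Cauchy-type bounds on $\nu,\sigma^j,\mu$ in \ref{as:stoch_vol} are precisely uniform real-analyticity estimates and, together with $\nu,\sigma^j>N$ and the growth conditions on $\mu$, they yield boundedness, $\mathbf C^2$-regularity in $x$ and analyticity in $t$ of $b$ and $\sigma$. This verifies \ref{as:drift_vol_coeffs}. I would then take $r(t,x)\equiv r\ge0$, which satisfies \ref{as:tc_dc_coeffs} trivially, and set $f(x)\df e^{x^1}$ and $g(x)\df(e^{x^1}-\Gamma)^+$, so that $S^F_t=e^{-rt}P_t$ and $S^B_t=e^{-rt}V_t$ are exactly the discounted stock and option; the forward relation $(\mathcal L^X(t)-r)f=0$ holds because $\tfrac12 a^{11}+b^1=\tfrac12\nu^2+(r-\tfrac12\nu^2)=r$, and since $f$ and $\partial_{x^1}g=e^{x^1}1_{\{x^1>\log\Gamma\}}$ grow only exponentially in $x^1$, the weighted regularity in \ref{as:A3} and the growth bound in \ref{aas:tc_sing_onedim} are immediate.

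The substance of the proof is the structural condition \ref{aas:tc_sing_onedim}. Because $f$ and $g$ both depend on $x^1$ alone, the Jacobian $J[f,g]$ has zero second column and is singular everywhere, so the first alternative fails and I must produce test functions with $\mathcal B_K[g,\varphi;1]\neq0$. With $f=e^{x^1}$ one has $|J[f,h]|=e^{x^1}\partial_{x^2}h$ for every scalar $h$; hence $C=|J[f,r]|=0$ and $A^{11}=e^{x^1}\partial_{x^2}a^{11}=2e^{x^1}\nu\,\partial_{x^2}\nu$, which is nonzero almost everywhere precisely because $\Id\nu/\Id y\neq0$ almost everywhere by \ref{as:stoch_vol} (and $\nu>N>0$). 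The call payoff is only once weakly differentiable: $\partial_{x^1}g$ jumps by $\Gamma$ across the line $\ell\df\{x^1=\log\Gamma\}$, so the reformulation $\mathcal Q(1)g\neq0$ of the Remark following Theorem~\ref{thm:FBMR} is unavailable; yet away from $\ell$ a direct computation gives $\mathcal Q(1)g\propto e^{x^1}\partial_{x^2}(\tfrac12 a^{11}+b^1)=0$, while the jump contributes a singular term weighted by $A^{11}$.

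For a box $K$ straddling $\ell$ I would test against $\varphi(x)=\theta(x^1)\chi(x^2)$ with $\theta\in\mathbf C^\infty_0$, $\theta(\log\Gamma)=1$ and $x^1$-support shrinking to $\log\Gamma$. Integrating the term $\int_{K\cap\{x^1>\log\Gamma\}}\tfrac12 A^{11}e^{x^1}\partial_{x^1}\varphi\,\Id x$ by parts isolates the interface integral $-\tfrac{\Gamma}{2}\int_{\ell\cap K}A^{11}(1,\log\Gamma,x^2)\chi(x^2)\,\Id x^2$, which is independent of the $x^1$-width and, since $A^{11}\neq0$ almost everywhere on $\ell$, is nonzero for a suitable $\chi$; the remaining volume terms are $O(\|\varphi\|_{\mathbf L_1})$ and are rendered negligible by shrinking the support. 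This gives $\mathcal B_K[g,\varphi;1]\neq0$ for the sets $K$ meeting $\ell$, establishing \ref{aas:tc_sing_onedim}.

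Theorem~\ref{thm:FBMR} then yields the integral representation property, hence completeness, for $(S^F,S^B)$; since the coordinate maps are, for each $t$, smooth bijections with smooth inverses and discounting is a deterministic continuous operation, the Brownian filtration and the set of attainable claims are unchanged, so completeness transfers back to the original $(P_t,V_t)$-market, proving Theorem~\ref{thm:stoch_vol}. I expect the verification of \ref{aas:tc_sing_onedim} to be the main obstacle: the low regularity of the call payoff forces one to work with the bilinear form $\mathcal B_K$ rather than with the operator $\mathcal Q(1)$, and the whole argument hinges on $\Id\nu/\Id y\neq0$, which is exactly what keeps $A^{11}$ from vanishing across $\ell$ and makes the kink of the payoff \emph{visible} to the completeness criterion despite the everywhere-singular Jacobian.
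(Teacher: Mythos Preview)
Your proposal is correct and follows essentially the same route as the paper: the same change of variables $X^1=\log P$, $X^2=e^{\alpha t}(Y-m)$, the same choice of $f$, $g$, $r$, and the same reduction of \ref{aas:tc_sing_onedim} to an interface integral on $\ell=\{x^1=\log\Gamma\}$ weighted by $\partial_{x^2}a^{11}\propto\nu\,\Id\nu/\Id y$. The only technical difference is in how that integral is extracted: the paper computes $\mathcal{B}_K[g,\varphi;1]$ \emph{exactly} via the divergence theorem, using the identity $\partial_{x^2}b^1=-\tfrac12\partial_{x^2}a^{11}$ to make all volume contributions cancel and arrive at the closed formula
\[
\mathcal{B}_K[g,\varphi;1]=-\tfrac12(e^{-r}\Gamma)^2\int_{K\cap\ell}\frac{\Id a^{11}}{\Id x^2}\,\varphi(\log\Gamma,\cdot)\,\Id x^2,
\]
whereas you integrate one term by parts and suppress the rest by shrinking the $x^1$-support of a product test function. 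The cancellation you observe away from $\ell$ (namely $\mathcal{Q}(1)g\propto\partial_{x^2}(\tfrac12 a^{11}+b^1)=0$) is precisely what drives the paper's exact computation, so your shrinking step is in fact redundant once that cancellation is used. Both arguments verify the condition only for sets $K$ meeting $\ell$; the paper's closed formula also gives zero when $K\cap\ell=\emptyset$, so this restriction is shared.
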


\begin{remark}
 We draw attention to the fact that in \ref{as:stoch_vol} the quite specific assumptions on the space regularity of the coefficients of \eqref{eq:stoch_vol} are solely necessary because we are allowing $P$ to evolve according to a geometric Brownian motion and $Y$ to have mean reverting dynamics, both cases in which the coefficients are unbounded.  This can be seen easily from the proof of Theorem \ref{thm:stoch_vol} below. In the absence of this particular choice of dynamics the verification of the assumptions of Theorem \ref{thm:FBMR} is much simpler.
\end{remark}

\begin{remark}
 Two specific examples of functions which satisfy the conditions on $\nu$ and $\sigma$ in \ref{as:stoch_vol} are scaled and shifted versions of the $arctan$ and $tanh$ functions. 
\end{remark}

\begin{proof}
Consider the stochastic processes
\begin{align*}
X^1_t &\df \log P_t, && S^F_t \df e^{-rt+X^1_t},\\
X^2_t&\df e^{\alpha t}(Y_t-m), &&S^B_t \df\E[e^{-r}(e^{X^1_1}-\Gamma)^+|\mathcal{F}_t].
\end{align*}
By a simple application of It\^{o}'s formula we find that
\begin{equation*}
\begin{aligned}
\Id X^1_t &= \left(r-\frac{1}{2}\nu(m+e^{-\alpha t}X^2_t)^2\right)\ \Id t + \nu(m+e^{-\alpha t}X^2_t)\ \Id W^1_t\\
\Id X^2_t &= -e^{\alpha t}\mu(e^{X^1_t},m+e^{-\alpha t}X^2_t)\ \Id t + e^{\alpha t}\sigma(m+e^{-\alpha t}X^2_t)\ \Id W_t.
\end{aligned}
\end{equation*}
We define $\tilde{\nu}(t,x^2)\df\nu(m+e^{-\alpha t}x^2)$, $\tilde{\sigma}^j(t,x^2)
\df\sigma^j(m+e^{-\alpha t}x^2)$ and $\tilde{\mu}(t,x^1,x^2)\df\mu(e^{x^1},m+e^{-\alpha t}x^2)$. Now observe that by Lemma \ref{lem:analytic_comp} in Appendix \ref{ap:analytic_comp} the maps $t\mapsto \tilde{\nu}(t,\cdot), \tilde{\sigma}^j(t,\cdot)$ of $[0,1]$ to $\mathbf{C(\R)}$ and the map $t\mapsto\tilde{\mu}(t,\cdot,\cdot)$ of $[0,1]$ to $\mathbf{C}$ are analytic. By computing the derivative with respect to $t$ and using the bounds on the derivatives of $\mu$, $\nu$ and $\sigma^j$ hypothesized in \ref{as:stoch_vol} it is verified easily that the maps $t\mapsto \tilde{\nu}(t,\cdot), \tilde{\sigma}(t,\cdot)$ are continuous of $[0,1]$ to $\mathbf{C}^2(\R)$ and the map $t\mapsto\tilde{\mu}(t,\cdot,\cdot)$ is continuous of $[0,1]$ to $\mathbf{C}^1$. Therefore, conditions \ref{as:drift_vol_coeffs}-\ref{as:A3} are satisfied.

It remains to verify condition \ref{aas:tc_sing_onedim}. With the definitions $f(x^1)\df e^{-r + x^1}$, $a^{11}(x^2)\df (\tilde{\nu}(1,x^2))^2$, $a^{12}(x^2)\df e^\alpha(\tilde{\nu}\tilde{\sigma}^1)(1,x^2)$, $b^1(x^2)\df r-(1/2)(\tilde{\nu}(1,x^2))^2$, $g(x^1) \df e^{-r}(e^{x^1}-\Gamma)^+$ the pairing $\mathcal{B}_K$ becomes
\begin{align*}
 \mathcal{B}_K[g,\varphi;1] &= \frac{1}{2}\int_K \sum_{k=1}^2 \left(\frac{\Id f}{\Id x^1}\frac{\Id a^{1k}}{\Id x^2}\frac{\partial \varphi}{\partial x^k} +  \frac{\partial }{\partial x^k}\left(\frac{\Id f}{\Id x^1}\frac{\Id a^{1k}}{\Id x^2}\right)\varphi\right)\frac{\Id g}{\Id x^1}
 -2 \frac{\Id f}{\Id x^1}\frac{\Id b^1}{\Id x^2}\frac{\Id g}{\Id x^1}\varphi\ \Id x\\
 &=\frac{1}{2}e^{-r}\int_{K\cap(x^1\geq \log(\Gamma))}e^{x^1}\text{div}\left(\frac{\Id f}{\Id x^1}\frac{\Id a^{11}}{\Id x^2}\varphi,\frac{\Id f}{\Id x^1}\frac{\Id a^{12}}{\Id x^2}\varphi\right) -2e^{x^1} \frac{\Id f}{\Id x^1}\frac{\Id b^1}{\Id x^2}\varphi\ \Id x\\
 &=-\frac{1}{2}(e^{-r}\Gamma)^2 \int_{\hat{K}} \frac{\Id a^{11}}{\Id x^2}\varphi(\log\Gamma,\cdot)\ \Id x^2,
\end{align*}
where $\hat{K}\df K\cap(x^1=\log(\Gamma))$ and the last step follows by a variant of the divergence theorem and the fact that $\Id b^1/\Id x^2 = -(1/2)\Id a^{11}/\Id x^2$.

Since
\begin{equation*}
\frac{\Id a^{11}}{\Id x^2}(\cdot) = 2\left(\tilde{\nu}\frac{\Id \tilde{\nu}}{\Id x^2}\right)(1,\cdot) = 2e^{-\alpha}\left(\nu\frac{\Id \nu}{\Id y}\right)(m+e^{-\alpha}\cdot),
\end{equation*}

it follows from \ref{as:stoch_vol} that for every bounded, open set $K$ in $\R^2$ we can find a function $\varphi=\varphi(x)$ in $\mathbf{W}^1_{p,0}(K)$, for some $p>1$ such that $\mathcal{B}_K[g,\varphi;1]\neq 0$. For example, we can choose an appropriately truncated, shifted and scaled version of the function $\varphi(x) = -|x|$. The result now follows by Theorem \ref{thm:FBMR}.
\end{proof}

\section*{Acknowledgements}
It is a pleasure to thank Dmitry Kramkov for introducing me to the
topic of market completion with derivative securities and for interesting discussions. I would like to thank L\'{e}onard Monsaingeon and Peter Tak\'{a}c for discussions
relating to the presented work and Johannes Ruf for comments on an early version of this paper. 

\appendix
\section{Analyticity as a map for compositions of functions}\label{ap:analytic_comp}
The following lemma is needed for the proof of Theorem \ref{thm:stoch_vol}.
\begin{lemma}\label{lem:analytic_comp}
 Let $S$ be a bounded, open set in $\R$ and the measurable functions $g=g(x) : \R \to \R$, $f = f(t,x) : S\times\R\to\R$ have the property that $g(\cdot)$, $f(\cdot,x)$ are infinitely many times continuously differentiable, $f(t,\cdot)$, $\partial_t f(t,\cdot)$ are twice continuously differentiable and assume there exists a continuous function $C=C(x)$ on $\R$ and constants $\delta,R,D,r,\epsilon >0$ such that
  \begin{equation*}
  \begin{aligned}
  \left| \frac{\partial^k g}{\partial x^k}(x) \right|&\leq \frac{Dk!}{(r+\epsilon|x|)^k},&&x\in\R, \\
  \delta|C(x)|\frac{k!}{R^k}\leq \left| \frac{\partial^k f}{\partial t^k}(t,x) \right|&\leq |C(x)| \frac{k!}{R^k}, &&(t,x)\in S\times\R,
  \end{aligned}
 \end{equation*}
for $k=1,2,\ldots$, then the map $t\mapsto h(t,\cdot) \df (g\circ f)(t,\cdot)$ is analytic as a map of $S$ to $\mathbf{C}(\R)$.
\end{lemma}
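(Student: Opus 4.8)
The plan is to verify the definition of analyticity directly at each point $u\in S$, the natural candidate for the coefficients being the Banach-space valued Taylor coefficients $A_n(u)\df \tfrac1{n!}\partial_t^n h(u,\cdot)$. First I would record that for each fixed $x$ the scalar map $t\mapsto h(t,x)=g(f(t,x))$ is a genuine real-analytic function: the bound on $\partial_x^k g$ shows that $g$ is real-analytic (its Taylor series at any $y_0$ has radius of convergence at least $r+\epsilon|y_0|$), while the upper bound on $\partial_t^k f$ shows that $t\mapsto f(t,x)$ is real-analytic with radius at least $R$; hence $h(\cdot,x)$ is real-analytic and its power series at $u$ is precisely $\sum_n A_n(u)(x)\,(t-u)^n$. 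Consequently it suffices to produce a uniform-in-$x$ geometric estimate $\|A_n(u)\|_{\mathbf{C}(\R)}\le M\,\theta^n$ with $\theta<\infty$; this simultaneously shows $A_n(u)\in\mathbf{C}(\R)$ and forces the series to converge in sup-norm, hence to $h(t,\cdot)$, for $|t-u|<1/\theta$.

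The engine of the estimate is the Faà di Bruno formula. Writing $\partial_t^n[g(f)]$ as the sum, over sequences $(m_1,m_2,\dots)$ of nonnegative integers with $\sum_j jm_j=n$, of $\tfrac{n!}{\prod_j (j!)^{m_j}m_j!}\,g^{(m)}(f)\prod_j(\partial_t^j f)^{m_j}$ with $m\df\sum_j m_j$, I would substitute the two hypotheses. The bound $|g^{(m)}(f)|\le Dm!/(r+\epsilon|f|)^m$ together with $|\partial_t^j f|\le |C|\,j!/R^j$ cancels the $\prod_j(j!)^{m_j}$ factors and pulls out $Dn!/R^n$, leaving the purely combinatorial sum $\sum_{\sum_j jm_j=n}\tfrac{m!}{\prod_j m_j!}\,Q(u,x)^{m}$ with $Q(u,x)\df |C(x)|/(r+\epsilon|f(u,x)|)$. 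A generating-function computation (the coefficient of $z^n$ in $(1-z)/(1-(1+Q)z)$) evaluates this sum to $Q(1+Q)^{n-1}$, so that, with $Q_u\df\sup_{x}Q(u,x)$,
\[
\|A_n(u)\|_{\mathbf{C}(\R)}=\tfrac1{n!}\sup_x|\partial_t^n h(u,x)|\le \frac{D\,Q_u(1+Q_u)^{n-1}}{R^n},
\]
which is exactly the sought geometric bound with $\theta=(1+Q_u)/R$, yielding radius of convergence at least $R/(1+Q_u)>0$.

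The hard part, and the only place where the \emph{lower} bound on $\partial_t^k f$ is used, is showing that $Q_u<\infty$, i.e. that $|C(x)|$ is dominated by $r+\epsilon|f(u,x)|$ uniformly in $x$. The lower bound $|\partial_t f(t,x)|\ge \delta|C(x)|/R>0$ forces $t\mapsto f(t,x)$ to be strictly monotone on each component of $S$ with slope of order $|C(x)|$, so that $f$ sweeps out an interval of length $\gtrsim |C(x)|$ as $t$ ranges over $S$, while the upper bound controls its oscillation through $|f(t,x)-f(u,x)|\le |C(x)|\tfrac{|t-u|/R}{1-|t-u|/R}$. Combining these is meant to give $|C(x)|\le N(1+|f(u,x)|)$ for large $|x|$, whence $Q_u<\infty$; this comparability is transparent in the intended application, where $f(t,x)=m+e^{-\alpha t}x$ and $C(x)\sim|x|$, so that $|f(t,x)|\gtrsim|x|$ at \emph{every} $t$. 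I expect precisely this comparability to be the main obstacle: the derivative bounds control $\sup_{t\in S}|f(t,x)|$, but transferring that control to the single expansion point $u$ is delicate, and I would want that reduction made airtight before relying on the uniform estimate for the $A_n(u)$.
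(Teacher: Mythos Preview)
Your approach is essentially identical to the paper's: both apply Fa\`a di Bruno, insert the two derivative estimates, and reduce everything to the combinatorial identity
\[
\sum_{\sum j m_j = n}\frac{m!}{\prod_j m_j!}\,Q^{m}=Q(1+Q)^{n-1},
\]
arriving at $|\partial_t^n h|\le \tfrac{Dn!}{R^n}\,Q(1+Q)^{n-1}$ with $Q=|C(x)|/(r+\epsilon|f(t,x)|)$. The only substantive question, as you correctly isolate, is why $Q$ is bounded uniformly in $x$.

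The paper handles that step in a single stroke: it replaces $r+\epsilon|f(t,x)|$ by $r+\epsilon\delta|C(x)|$ in the denominator, after which $|C|/(r+\epsilon\delta|C|)\le 1/(\epsilon\delta)$ is trivially uniform and the explicit constants $M=D/(1+\epsilon\delta)$ and $L=R\epsilon\delta/(1+2\epsilon\delta)$ drop out. That substitution is exactly the pointwise inequality $|f(t,x)|\ge \delta|C(x)|$, and your worry is well placed: this does \emph{not} follow from the stated hypotheses, which bound $\partial_t^k f$ only for $k\ge 1$ and say nothing about $|f|$ itself. So the gap you identify in your proposal is the same gap the paper silently passes over; in the paper's one application, $f(t,x)=m+e^{-\alpha t}x$, the needed comparability $|f(t,x)|\gtrsim |C(x)|$ is immediate for large $|x|$, which is presumably what is being used.
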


\begin{proof}
 To prove the analyticity assertion we have to show the existence of positive constants $L,M>0$, such that
 \begin{equation}\label{eq:ap_est}
  \left\|\frac{\partial^k h}{\partial t^k}(t,\cdot)\right\|_{\mathbf{C}(\R)} \leq \frac{Mk!}{L^k}, \quad t\in S.
 \end{equation}
 By our differentiability hypothesis we may apply the formula of Fa\'{a} di Bruno to obtain
 \begin{equation*}
  \frac{\partial^k h}{\partial t^k}(t,x) = \sum \frac{k!}{\alpha_1!\ldots\alpha_k!} \frac{\partial^{|\alpha|}g}{\partial x^{|\alpha|}}(f(t,x))\left[\left(\frac{\partial f}{\partial t}\right)^{\alpha_1}\ldots \left(\frac{1}{k!}\frac{\partial^k f}{\partial t^k}\right)^{\alpha_k}\right](t,x),
 \end{equation*}
 where the sum is taken over all $\alpha_1,\ldots,\alpha_k$ such that $\alpha_1 + 2\alpha_2 +\ldots + k\alpha_k = k$.
Using our estimates on the derivatives of $g$ and $f$ and Lemma 1.4.1 in \cite[p.~18]{sKrantz2002} we estimate
\begin{align*}
 \left|\frac{\partial^k h}{\partial t^k}(t,\cdot)\right| &\leq k!\frac{D}{R^k}\sum \frac{|\alpha|!}{\alpha_1!\ldots\alpha_k!} \left(\frac{|C(x)|}{r+\epsilon\delta|C(x)|}\right)^{|\alpha|}\\
 &= \frac{k!D|C(x)|}{R^k(r+\epsilon\delta|C(x)|)}\left(1 +\frac{|C(x)|}{r+\epsilon\delta|C(x)|} \right)^{k-1}\\
 &=\frac{D|C(x)|}{r+(1+\epsilon\delta)|C(x)|} \left[\frac{k!}{R^k}\left(\frac{r+(1+\epsilon\delta)|C(x)|}{r+\epsilon\delta |C(x)|}\right)^k\right].
\end{align*}
Taking norms the estimate \eqref{eq:ap_est} follows with $M = D/(1+\epsilon\delta)$ and $L = R\epsilon\delta/(1+2\epsilon\delta)$.
\end{proof}

\bibliography{MarketCompletenessOP_1D.bib}
\bibliographystyle{apalike}
\end{document}